\documentclass[12pt]{article}
\usepackage{amsmath, amsfonts, amssymb, amsthm, graphicx, geometry, arydshln, umoline, subfig, newtxtext, newtxmath, color, comment, soul, enumerate, bm}
\usepackage{appendix}
\usepackage{tikz}
\usetikzlibrary{positioning}
\usepackage{natbib}
\usepackage[colorlinks=true,citecolor=blue]{hyperref}
\usepackage{booktabs}
\usepackage{makecell}
\usepackage{bbm}

\theoremstyle{definition}
\newtheorem{theorem}{Theorem}

\newtheorem{lemma}{Lemma}

\newtheorem{proposition}{Proposition}

\begin{document}

\title{Peak-Robust Voting Rules\thanks{We thank Noriaki Kiguchi, Yukio Koriyama, Herv\'e Moulin, Mat\'ias N\'u\~nez, Yuki Tamura, William Thomson, Marc Vorsatz, and seminar participants at Waseda university, the spring meeting on the Operations Research Society of Japan 2026, I France-Spain Meeting on Microeconomic Theory, and SSCW 2026 for helpful comments. 
Satoshi Nakada acknowledges the financial support by JSPS KAKENHI Grant Number 25K16606 and 25K00618.
All the remaining errors are our own.}
}
\author{
Satoshi Nakada\thanks{School of Management, Department of Business Economics, Tokyo University of Science, 1-11-2, Fujimi, Chiyoda-ku, Tokyo, 102-0071, Japan. Email: snakada@rs.tus.ac.jp}
\and 
Toshiya Yoshimura\thanks{Undergraduate School of Management, Department of Business Economics, Tokyo University of Science, 1-11-2, Fujimi, Chiyoda-ku, Tokyo, 102-0071, Japan. Email: tossyyossy01@gmail.com}
}
\date{\today}
\maketitle

\begin{abstract}
This paper proposes new robustness criteria for social choice correspondences under single-peaked preferences, inspired by the concepts of robustness in statistical estimation, where robust estimators are designed to be resilient to both model misspecification and outliers. 
Motivated by robustness to model assumptions, we introduce \textit{peak-robustness}: a voting rule is peak-robust if it never selects an alternative that is a majority loser relative to some unchosen alternative for any preference profile sharing the same peak profile. 
To capture robustness to outliers, we propose \textit{tail-invariance}, which requires that variations in the tails of the peak distribution do not affect the collective decision.
Our main result shows that the median voting rule is the unique efficient rule satisfying these robustness criteria. When peak-robustness is weakened, we characterize the broader class of \textit{quantile rules}. 
Taken together, these results provide a robustness-based axiomatic foundation for median and quantile voting rules, independent of the traditional strategy-proofness approach.

\noindent\textit{JEL classification}: D71.
\newline\noindent\textit{Keywords}: Single-peakedness, Median voter rule, Quantile rule, Robustness, Axiomatization.
\end{abstract}

\section{Introduction}

\subsection{Motivation}
Consider a voting problem in which individuals have single-peaked preferences over a linearly ordered set of alternatives \citep{black1948rationale}. 
A central question in this environment is how to aggregate the information contained in voters' peaks into a collective decision. The classical literature has approached this problem primarily from the perspective of strategic-robustness. 
In particular, \citet{Moulin1980} showed that strategy-proof voting rules on the single-peaked domain coincide with generalized median voter schemes, thereby establishing median-type rules as canonical mechanisms for aggregating preferences.\footnote{There is a vast literature on the design of strategy-proof voting rules, initiated by the Gibbard--Satterthwaite impossibility theorem \citep{gibbard1973manipulation,satterthwaite1975strategy}. The single-peaked domain provides a classical escape from this impossibility, and the characterization of generalized median voter schemes is one of the best-known positive results in this literature. For more recent developments, see \citet{saporiti2009strategy}, \citet{achuthankutty2018single}, \citet{hagiwara2026strategy}, and the references therein.}

This paper takes a different perspective. Rather than asking which voting rules are immune to strategic manipulation, we ask which voting rules are robust to uncertainty in the informational content of preference profiles. 
Specifically, we view the profile of peaks as a datum describing the voters' opinions and interpret a voting rule as a statistical estimator for a representative opinion.
This perspective naturally suggests importing ideas from robust statistics \citep{Huber1981}, where estimators are designed to remain stable in the presence of model uncertainty and extreme observations. 
We ask whether analogous robustness principles can provide a foundation for desirable voting rules.

To formalize this viewpoint, we introduce robustness criteria requiring that the outcomes of voting rules should not depend excessively on unobserved or extreme aspects of preferences. The first criterion is \textit{peak-robustness}. Intuitively, this criterion requires that the chosen outcome can never be regarded as a majority loser, regardless of how voters rank alternatives away from their peaks. While peak-only voting rules are often attractive because they require only limited information from voters, the planner still does not observe the full preference profile. Consequently, many underlying preference profiles may be consistent with the same profile of peaks. Peak-robustness adopts a conservative attitude toward this uncertainty by requiring that the chosen outcome never be a majority loser for \emph{any} preference profile consistent with the observed peaks. We also consider a weaker version, referred to as \textit{weak peak-robustness}, which requires only that the chosen alternative not lose to another alternative for \emph{all} preference profiles consistent with the observed peaks.

Our second robustness criterion is \textit{tail invariance}. This axiom requires that perturbations of the most extreme peaks do not affect the outcome as long as the central structure of the peak profile remains unchanged. In other words, changes in the tails of the peak distribution should have no impact on the collective decision. This requirement directly reflects the central idea of robust statistics that an estimator should not be influenced by extreme observations or outliers.

Our first result establishes that, together with efficiency, peak-robustness and tail invariance characterize the median voter rule. 
Importantly, we neither restrict attention to peak-only voting rules nor impose strategy-proofness, both of which play central roles in the classical literature. Thus, our characterization provides a new foundation for the median voter rule based on statistical robustness considerations rather than strategic robustness.

While peak-robustness uniquely identifies the median voter rule, weak peak-robustness admits a richer class of set-valued voting rules. Our second result shows that, together with efficiency, anonymity, endpoint invariance, and peak monotonicity, weak peak-robustness characterizes the class of \textit{quantile rules}. These rules can be interpreted as set-valued generalizations of the median voter rule and select outcomes determined by order statistics of the peak distribution, much like statistical quantiles.

We emphasize that our perspective differs from the classical strategy-proofness approach to single-peaked preferences. Beginning with the seminal work of \citet{Moulin1980}, a vast literature has characterized generalized median voter schemes through strategic robustness perspective.
In contrast, we do not impose strategy-proofness or any related incentive requirement. Instead, we investigate which voting rules emerge when robustness is taken as the primary design objective. Although both approaches are motivated by robustness concerns in a broad sense, the robustness notions studied here are logically independent of strategy-proofness.

Taken together, our results provide a robustness-based foundation for median and quantile voting rules. While the classical literature derives median-type rules from strategic considerations, we show that the same rules arise from robustness requirements concerning incomplete preference information and extreme opinions. In this sense, the paper establishes a new connection between social choice theory and robust statistics.

The rest of this paper is organized as follows.
In Section \ref{sec_voting}, we introduce the formal model and voting rules.
In Section \ref{sec_robust}, we provide our main robustness axioms.
In Section \ref{sec_main}, we provide our main results.
Section \ref{sec_conclusion} discusses several related issues as concluding remarks.
All omitted proofs are relegated to the Appendix.

\subsection{Related literature}
Our paper is closely related to recent studies of social choice correspondences on the single-peaked domain. \citet{klaus2020strategy} and \citet{bhattacharya2024strategy} study choice correspondences that are assumed a priori to select intervals of alternatives. Each paper characterizes a class of generalized median voting rules under this interval restriction, which differs from our approach. In contrast, we do not impose an interval structure; rather, it emerges endogenously as a consequence of our axioms.

The robustness perspective has also been explored in voting and institutional design. 
\citet{nakada2025robust} propose robustness criteria for voting rules with binary alternatives that avoid worst-case scenarios in which a voting rule is Pareto dominated by another rule under the true but unknown distribution of preferences. They characterize weighted majority rules through these robustness requirements. \citet{kiguchi2025robust} extend this approach to environments with more than two alternatives and single-peaked preferences. 
Using a worst-case utilitarian criterion, they show that the median voter rule maximizes the objective function, thereby providing a robustness-based foundation for the median rule. Our approach shares the same spirit in that desirable voting rules should avoid unfavorable worst-case scenarios. 
A key difference, however, is that our robustness criteria are motivated by a statistical perspective and do not rely on incomplete information, probabilistic uncertainty, or strategic considerations.

More broadly, our paper is related to a growing literature connecting social choice theory and statistics. In a seminal contribution, \citet{pivato2013voting} interprets voting rules as statistical estimators and shows that several classical voting rules can be viewed as maximum-likelihood or maximum a posteriori estimators under suitable probabilistic models. In particular, the median rule emerges from an exponential error model on a metric space. Beyond their classical role as robust statistical estimators, quantiles have also been axiomatized as fundamental decision criteria \citep{rostek2010quantile,chambers2007ordinal,chambers2009axiomatization,fadina2023one} and have found applications in dynamic choice, portfolio selection, and statistical decision theory \citep{de2019dynamic,castro2022portfolio,manski2023statistical}. Our results complement these studies by providing an axiomatic foundation for median and quantile aggregation in collective decision making. Unlike the statistical-estimation approach, our characterization does not rely on probabilistic assumptions or error models; instead, quantile rules emerge directly from robustness requirements imposed on social choice correspondences.

\section{Voting rules}\label{sec_voting}
\subsection{Median voter rules}
The set of voters is $N=\{1,\ldots,n\}$, where $n=2m+1\ (m\ge 1)$ is an odd number, and the set of alternatives is  $X=\{a_1,\ldots,a_K\}$. The alternatives are arranged according to an ordering $<$ on X such that $a_1<a_2<\cdots<a_K$. 
Each voter $i$ has a strict preference, $P_i$ (a complete, transitive, and asymmetric binary relation on $X$).
The weak preference $R_i$ is induced by $P_i$ such that  $x R_i y $ if and only if $ x P_i y \lor x = y$.
We also assume that $P_i$ is \textbf{single-peaked} on $X$, i.e., there exists a ``peak", $\tau(P_i)$, such that for any $x,y\in X$,
\[
[y<x\le \tau(P_i) \text{ or } \tau(P_i)\le x < y]\Rightarrow xP_i y,
\]
where the peak, $\tau(P_i)$, is the top-ranked alternative in $X$ for voter $i\in N$. 
For $A\in 2^X\setminus\{\emptyset\}$, let $b_{R_i}(A)$ be the best alternative in $A$ for $i$, i.e., for any $a\in A$, $b_{R_i}(A)R_i a$. 
Similarly, for $A\in 2^X\setminus\{\emptyset\}$, let $w_{R_i}(A)$ be the worst alternative in $A$ for $i$, i.e., for any $a\in A$, $a R_i w_{R_i}(A)$.

Let $\mathcal{SP}$ be the set of all single-peaked preferences over $X$ according to $<$ and let $P=(P_1,\ldots,P_n)$ denote a profile of single-peaked preferences where each $P_i \in \mathcal{SP}$. 
Let $\mathcal{SP}^n$ be the set of all single-peaked profiles on $X$. 
For any $P \in \mathcal{SP}^n$, let $\tau(P)=\{ \tau(P_1), \ldots, \tau(P_n) \}$ be the profile of peak at $P$.
We denote the $p$-th smallest peak in $\tau(P)$ by $\tau_{(p)}(P)$, where $\tau_{(1)}(P) \leq \tau_{(2)}(P) \leq \dots \leq \tau_{(n)}(P)$ where $p \in \{1, \dots, n\}$.
In particular, let $\underline \tau(P)=\text{min}_{i \in N}\tau(P_i)=\tau_{(1)}(P)$ and $\overline \tau(P)=\text{max}_{i \in N}\tau(P_i)=\tau_{(n)}(P)$ be the lowest and largest peak in  $\tau(P)$, respectively, where the minimum and maximum are taken according the ordering $<$ on $X$.

A social choice function (SCF) $f: \mathcal{SP}^n \to X$ produces an alternative $a \in X$ for every profile $P \in \mathcal{SP}^n$.
A well-known SCF is a \textbf{generalized median voter rule} introduced by  \citet{Moulin1980}, which is defined as follows: There exists $n-1$ alternatives $x_1, \ldots, x_{n-1}$ such that, for any $P \in \mathcal{SP}^n$, 
\[
F^{x}(P)= \text{med}\{ \tau(P_1), \ldots, \tau(P_n), x_1,\ldots, x_{n-1}\}.
\]
In particular, if $x_j \in \{a_1,a_K\}$ for any $j=1,\ldots, n-1$ and $|\{j \in N\mid x_j=a_1\}|= |\{j \in N\mid x_j=a_K\}|$, then it is the \textbf{median voter rule}, that is, $F(P)=\text{med}\{ \tau(P_1), \ldots, \tau(P_n)\}$.
A remarkable property of the median rule is to select the \textbf{Condorcet winner} of $P$, denoted by $C(P)$.
Indeed, since $n$ is odd, it is known that $C(P)=\text{med}\{\tau(P)\}$ for any $P \in \mathcal{SP}^n$.
In the simgle peak-domain, \citet{Moulin1980} provides the following axiomatic characterization for the class of generalized median voter rules relying on the \textit{strategy-proofness}.

\begin{proposition}\label{prop_Moulin}\citep{Moulin1980}
 An SCF,  $f: \mathcal{SP}^n \to X$,  satisfies \textit{anonymity}, \textit{efficiency}, \textit{strategy-proofness} if and only if $f$ is a generalized median voter rule, where the axioms are defined as follows.
 
 \begin{itemize}
\item[] \textbf{Anonymity}: For any $P\in \mathcal{SP}^n$ and permutation $\sigma$ of $N$, $f(P)=f(P^\sigma)$, where $P^\sigma=(P_{\sigma(1)}, \ldots,P_{\sigma(n)})$.

\item[] \textbf{Efficiency}: For any $P\in \mathcal{SP}^n$ and any $x\in X$,
\[
[\exists y\in X,\ \forall i\in N,\  y P_i x] \Rightarrow x \neq f(P) .
\] 

\item[] \textbf{Strategy-proofness}: For any $i \in N$, and any $(P_i,P_{-i})\in \mathcal{SP}^n$, 
\[
f(P_i,P_{-i})R_i f(P_i^\prime,P_{-i})\  \forall P_i^\prime\in\mathcal{SP}.
\]
\end{itemize}
\end{proposition}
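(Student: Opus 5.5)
The statement has two directions. For the ``if'' direction I will verify the three axioms for $F^{x}$ directly. For the ``only if'' direction I will follow the classical route: first show that $f$ depends only on peaks and is ``uncompromising'' (a voter can move the outcome only by crossing it), then recover the phantoms $x_1,\ldots,x_{n-1}$ from $f$'s values at extreme profiles.

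\emph{Sufficiency.}
\begin{enumerate}[(i)]
\item \emph{Anonymity} holds because the median of the multiset $\{\tau(P_1),\ldots,\tau(P_n),x_1,\ldots,x_{n-1}\}$ does not depend on the order of the voters.
\item \emph{Efficiency}: there are $n$ peaks but only $n-1$ phantoms. So at least one peak lies weakly below the median of the $2n-1$ numbers, and at least one lies weakly above it. Hence $F^x(P)\in[\underline\tau(P),\overline\tau(P)]$. Any alternative in this interval is Pareto optimal under single-peakedness, since moving toward one extreme peak hurts the voter whose peak is at the other extreme.
\item \emph{Strategy-proofness}: take a voter $i$ with $\tau(P_i)<F^x(P)$. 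Any misreport $P_i'$ either leaves the median unchanged (if $\tau(P_i')\le F^x(P)$) or weakly raises it (if $\tau(P_i')>F^x(P)$). Both outcomes are weakly worse for $i$ by single-peakedness. The case $\tau(P_i)>F^x(P)$ is symmetric, and the case $\tau(P_i)=F^x(P)$ is trivial.
\end{enumerate}

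\emph{Necessity: peak-onlyness and uncompromisingness.}
\begin{enumerate}[(i)]
\item First I will show $f$ is \emph{uncompromising}. Suppose $f(P)=x$ and $\tau(P_i)<x$. Then for any $P_i'$ with $\tau(P_i')\le x$ we have $f(P_i',P_{-i})=x$, and symmetrically when $\tau(P_i)>x$.
\item The first step is that if $\tau(P_i)\le x$, then $f(P_i',P_{-i})\ge$ the alternative in question for every deviation in the right direction. This follows by applying strategy-proofness in both directions between $P_i$ and $P_i'$, and choosing intermediate single-peaked preferences that rank the two candidate outcomes in opposite orders whenever they lie on opposite sides of the peaks.
\item Efficiency enters to rule out outcomes outside $[\underline\tau,\overline\tau]$, which pins down the boundary cases.
\item Peak-onlyness then follows by changing voters' preferences one at a time while keeping their peaks fixed. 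Uncompromisingness shows the outcome never moves when $\tau(P_i)\neq f(P)$. When $\tau(P_i)=f(P)$, strategy-proofness alone forces the outcome to stay at the peak, since $i$ could otherwise secure it.
\end{enumerate}

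\emph{Necessity: constructing the phantoms.} For $k=0,\ldots,n$, let $y_k=f(P^k)$, where $P^k$ is any profile in which $k$ voters have peak $a_K$ and $n-k$ voters have peak $a_1$. This is well defined by anonymity and peakonlyness. Efficiency gives $y_0=a_1$ and $y_n=a_K$. Strategy-proofness applied to one voter switching from $a_1$ to $a_K$ gives $y_k\le y_{k+1}$. I will set the phantoms to be $y_1,\ldots,y_{n-1}$. These are the $x_j$ of the statement after relabeling; the phantoms need not be at the endpoints, because the class allows general $x_j$.

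Now take an arbitrary $P$ and relabel voters so that $\tau(P_1)\le\cdots\le\tau(P_n)$. Let $z=\operatorname{med}\{\tau(P),y_1,\ldots,y_{n-1}\}$, and choose $k$ so that $\tau(P_{n-k})\le z\le\tau(P_{n-k+1})$, with $z$ compatible with $y_k$. Starting from $P^k$, I move voters one at a time from their extreme peaks to their true peaks. Low-extreme voters move to peaks $\le z$, and high-extreme voters move to peaks $\ge z$.

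At each stage, uncompromisingness keeps the outcome fixed whenever the moving voter stays on the same side of the current outcome. A short case analysis on whether $y_k$ equals $z$ or $z$ equals some $\tau(P_j)$ shows the outcome ends at $z$. In the latter case, the last voter moved lands exactly on $z$, and strategy-proofness forces the outcome there.

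\emph{Main obstacle.} I expect the hard part to be the uncompromising and peak-only step, in particular constructing auxiliary single-peaked preferences that separate two candidate outcomes lying on the same side of a voter's peak. My first attempt will be Moulin's trick: pick $P_i''$ with the same peak that ranks the candidate on the far side just below the peak, so that the two strategy-proofness inequalities conflict.
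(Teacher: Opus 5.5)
The paper gives no proof of this proposition; it is imported from Moulin (1980). In Moulin's original formulation the rule depends only on peaks. The paper's only use of the proposition, in the proof of Lemma \ref{lem_gmc}, applies it to $l$ and $u$ after peak-onlyness has already been established. Your sufficiency part and your construction of $y_0\le y_1\le\cdots\le y_n$ are the standard ones and are fine.

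\textbf{The gap.} It lies in the step you flag yourself: deriving uncompromisingness, and hence peak-onlyness, for $f$ defined on full profiles. Your steps (i)--(iii) only compare reports of voter $i$ at a fixed $P_{-i}$, using efficiency only to keep outcomes in $[\underline\tau,\overline\tau]$. No argument of that kind can work.

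Take $n=3$ and $K\ge 3$, let the other two voters have peaks $a_1$ and $a_K$, and let $f$ be pairwise majority between $a_1$ and $a_K$.
\begin{enumerate}
\item At every profile $(P_i',P_{-i})$ the hull of peaks is all of $X$, so efficiency is vacuous there.
\item Voter $i$ is pivotal and simply obtains their preferred element of the option set $\{a_1,a_K\}$, so strategy-proofness holds.
\item A voter with peak $a_2$ who ranks $a_K$ above $a_1$ gets $a_K$, but another preference with the same peak $a_2$ yields $a_1$. This contradicts your claim (i).
\end{enumerate}
Your intended Moulin trick fails for the same reason. In the problematic configuration $\tau(P_i')\le y<\tau(P_i)<x$, the outcomes $y$ and $x$ lie on opposite sides of $\tau(P_i)$. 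That is exactly where single-peakedness leaves the ranking free, so the two strategy-proofness inequalities, and any intermediate preference, never conflict.

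\textbf{What is missing.} You need an argument that each option set $O_i(P_{-i})=\{f(P_i',P_{-i})\mid P_i'\in\mathcal{SP}\}$ is an interval of $X$. This must vary the other voters' preferences and use efficiency at profiles where it actually bites. That is the content of the tops-only results the paper cites in its concluding discussion of preference domains, which rely on multi-voter induction arguments.

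Once intervality is known, strategy-proofness makes $f(P_i,P_{-i})$ the projection of $\tau(P_i)$ onto $O_i(P_{-i})$. This gives peak-onlyness and uncompromisingness at once. If you instead assume peak-onlyness, as Moulin did, intervality is immediate. Take a peak $t$ in a putative gap and two preferences with peak $t$ that rank the neighbouring options oppositely; they cannot share one outcome that is best for both.

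\textbf{A smaller point.} In your last step, ``strategy-proofness forces the outcome to $z$'' requires $z\in O_j(\cdot)$ for the last mover $j$. That again needs intervality, plus the fact that $j$ can reach outcomes on both sides of $z$, so the order of moves matters. It is simpler to move voters outward from the true profile:
\begin{enumerate}
\item Let $w=f(P)$. Move peaks below $w$ to $a_1$ and peaks above $w$ to $a_K$; by uncompromisingness the outcome stays $w$.
\item Let $H$ and $E$ count the peaks above and at $w$. Moving the $E$ voters with peak $w$ to $a_1$ (respectively $a_K$), strategy-proofness gives $y_H\le w\le y_{H+E}$.
\item Counting then shows $w$ is the $n$-th smallest element of $\tau(P)\cup\{y_1,\dots,y_{n-1}\}$.
\end{enumerate}
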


\subsection{The generalized median correspondences}
A social choice correspondence (SCC), $F: \mathcal{SP}^n \to 2^X \setminus \{\emptyset\}$ chooses a set $A\in 2^X \setminus \{\emptyset\}$ for every profile $P\in \mathcal{SP}^n$.
Note that an SCF is an SCC with $|F(P)| = 1$ for any $P \in \mathcal{SP}^n.$ 
For each $P\in \mathcal{SP}^n$, let $l(P),u(P)\in F(P)$ be the minimum and maximum outcomes, respectively, that are chosen at $P$, which are defined as 
\[
 \ l(P)\le a \ \text{ and }\  \ u(P)\ge a, \forall a \in F(P).
\]

We introduce the following new rules that generalize the median voter rule.
To begin with, we provide the definition of an ``interval'' of alternatives.
For any alternative $a,b \in X$, we define $[a,b]$ as follows:
\[
[a,b] = \{x \in X \mid a \le x \le b\}.
\]
An SCC, $F^{Q_L,Q_U}: \mathcal{SP}^n \to 2^X \setminus \{\emptyset\}$, is a \textbf{quantile rule} if there exist $Q_L,Q_U\in \{1,\ldots, m\}$ such that for any $P\in \mathcal{S}^n$,
\[
F^{Q_L,Q_U}(P)=[\tau_{(Q_L+1)}(P), \tau_{(n-Q_U)}(P)].
\]
If $Q_L = Q_U$, we call this a \textbf{symmetric quantile rule}.

Although our setting differs slightly, the quantile rule can be interpreted as a special case of \textit{the (anonymous and efficient) generalized median correspondence} introduced in \citet{klaus2020strategy}. 
Formally, an SCC $F^{\alpha,\beta}: \mathcal{SP}^n \to 2^X \setminus \{\emptyset\}$, is a \textbf{(anonymous and efficient) generalized median correspondence} if there exist $2(n-1)$ fixed alternatives $(\alpha,\beta)=(\alpha_1,\ldots,\alpha_{n-1},\beta_1,\ldots,\beta_{n-1}) \in X^{2(n-1)}$ such that for any $P\in \mathcal{S}^n$,
\[
  F^{\alpha,\beta}(P)=[F^{\alpha}(P), F^{\beta}(P)],
\]
where for all $i \in \{1,\ldots,n-1\}$,  $\alpha_i \le \beta_i$.
If for all $i \in \{1,\ldots,n\}$, $\alpha_i, \beta_i \in \{a_1,a_K\}$ and $m \le |\{j \in N\mid \alpha_j=a_1\}| $ and $m \le |\{j \in N\mid \beta_j=a_K\}|$, then an (anonymous and efficient) generalized median correspondence $F^{\alpha,\beta }$ is a quantile rule.
\cite{klaus2020strategy} provides a characterization of generalized median correspondences under the assumption that outcomes are always intervals, which we refer to as \textit{convexity}, as a generalization of Propostion \ref{prop_Moulin}.

\begin{proposition}\citep{klaus2020strategy}
An SCC, $F$, satisfies \ textit {convexity}, \textit{anonymity}, \textit{efficiency}, and \textit{optimistic and pessimistic strategy-proofness} if and only if $F$ is a (anonymous and efficient) generalized median correspondence, where the axioms are defined as follows

\begin{itemize}
\item[] \textbf{Anonymity}: For any $P\in \mathcal{SP}^n$ and permutation $\sigma$ of $N$, $F(P)=F(P^\sigma)$, where $P^\sigma=(P_{\sigma(1)}, \ldots,P_{\sigma(n)})$.
 
\item[] \textbf{Efficiency}: For any $P\in \mathcal{SP}^n$ and any $x\in X$,
\[
[\exists y\in X,\ \forall i\in N,\  y P_i x] \Rightarrow x \notin F(P) .
\] 

\item[] \textbf{Convexity}: A SCC, $F$, is said to be \emph{convex} if for any $P\in \mathcal{SP}^n$, and for any $a,b,c\in X$ where $a<b<c$,
\[
[a\in F(P) \ \land\  c\in F(P)] \Rightarrow b\in F(P).
\] 

\item[] \textbf{Optimistic and pessimistic strategy-proofness}: For any $ i \in N$, and any $(P_i,P_{-i})\in \mathcal{SP}^n$, 
\[
b_{P_i}(F(P_i,P_{-i}))R_i b_{P_i}(F(P_i^\prime,P_{-i})) \text{ or } w_{P_i}(F(P_i,P_{-i}))R_i w_{P_i}(F(P_i^\prime,P_{-i}))\  \ \forall P_i^\prime\in\mathcal{SP}.
\]
\end{itemize}
 
\end{proposition}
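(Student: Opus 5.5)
Since this proposition is cited from \citet{klaus2020strategy}, the plan is to adapt their argument to the present setting, proving sufficiency directly and reducing necessity to Proposition~\ref{prop_Moulin} applied to the two endpoint functions. For sufficiency, let $F^{\alpha,\beta}=[F^\alpha,F^\beta]$. Convexity holds because the outcome is an interval. Anonymity holds because $F^\alpha$ and $F^\beta$ depend only on the multiset of peaks. For efficiency, the order conditions on the fixed alternatives keep both endpoints inside $[\underline\tau(P),\overline\tau(P)]$. Every alternative in that interval is Pareto optimal under single-peakedness: moving toward one extreme peak hurts the voter with the other extreme peak. For optimistic and pessimistic strategy-proofness, fix $i$ with $\tau(P_i)=t$. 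A misreport moves each of $F^\alpha$ and $F^\beta$ weakly away from $t$, by the usual monotonicity of medians. If $t$ lies in the interval, the best element is $t$ itself and cannot be improved. If $t$ lies to the left of the interval, the best element is $F^\alpha$, which can only move weakly right, hence get weakly worse. The case where $t$ lies to the right is symmetric. So the ``optimistic'' comparison always holds.

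For necessity, take $F$ satisfying the four axioms and define the SCFs $l(P)$ and $u(P)$. By convexity, $F(P)=[l(P),u(P)]$. The plan is to show that $l$ and $u$ are each anonymous, efficient and strategy-proof SCFs, and then invoke Proposition~\ref{prop_Moulin} to write $l=F^\alpha$ and $u=F^\beta$. Anonymity and efficiency of $l$ and $u$ transfer immediately from $F$. The inequality $l\le u$ at all profiles should then force the ordering between the parameters, by evaluating at profiles whose peaks sit at the extremes.

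The main obstacle is strategy-proofness of $l$ and $u$, since the axiom only guarantees that the best or the worst element is not improved. I would argue by contradiction. Suppose $i$ with peak $t$ gains under $l$ by reporting $P_i'$. First, use efficiency and a standard lemma: because outcomes are intervals and any single-peaked preference with peak $t$ can be chosen, one can pass to a preference $\tilde P_i$ with the same peak that ranks the two relevant alternatives adversarially. Then compare best and worst elements of the intervals $[l,u]$ before and after the deviation. When $t\le l(P)$, the best element is $l$, so an improvement in $l$ contradicts the optimistic half. The pessimistic half must be handled by choosing $\tilde P_i$ so that the worst element also strictly improves; this step uses the freedom in ranking alternatives on opposite sides of $t$. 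When $t$ lies inside $F(P)$, a gain in $l$ is impossible because $l(P)$ would already be $t$ or the deviation outcome would have to rank above it. I expect a careful case split on the position of $t$ relative to both intervals, possibly using tops-onlyness derived first, to be the delicate part.
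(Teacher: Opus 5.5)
Your sufficiency half is fine. One correction: $F^\alpha(P),F^\beta(P)\in[\underline\tau(P),\overline\tau(P)]$ because there are only $n-1$ phantoms; the condition $\alpha_j\le\beta_j$ only gives $F^\alpha\le F^\beta$. The paper does not prove this cited proposition, so there is no proof to compare against. Your necessity route is the reduction the paper itself uses in proving Lemma~\ref{lem_gmc}: show that $l,u$ are anonymous, efficient, strategy-proof SCFs and apply Proposition~\ref{prop_Moulin}. There, however, strategy-proofness of $l$ and $u$ comes from endpoint invariance and peak-monotonicity, not from a set-valued strategy-proofness axiom. Extracting it from the set-valued axiom is exactly the step you leave open, and it is a genuine gap.

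With the axiom as printed, a profitable deviation must make the best \emph{and} the worst element strictly better. Under that reading the ``only if'' direction is false. Let $F(P)=\{a_K\}$ if exactly $n-1$ voters have peak $a_1$ and one has peak $a_K$, and $F(P)=[\underline\tau(P),\overline\tau(P)]$ otherwise. This $F$ is convex, anonymous and efficient, and it satisfies the printed condition. At every other profile, and for the $a_K$-voter at the special profile, the voter's peak lies in $F(P)$, so her best element cannot improve. An $a_1$-voter at the special profile can only reach $\{a_K\}$ or $[a_1,a_K]$, so her worst element stays $a_K$. Yet if one $a_1$-voter moves her peak to $a_2$, $l$ drops from $a_K$ to $a_1$. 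So $l$ is not monotone, and $F$ is not a generalized median correspondence. This is exactly where your plan breaks. Here $t=a_1<l(P)$, $F$ is peak-only, and $u(P)=a_K$ survives the deviation, so no choice of $\tilde P_i$ makes the worst element strictly improve.

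Suppose instead the conjunction is intended, as the name suggests: neither the best nor the worst element may improve. Then your sufficiency part must also check the pessimistic half, which is easy. For $t<l$ both endpoints can only move right, so the worst element $u$ only gets worse. For $l<t<u$ the interval only expands. For $t=l<u$ the new interval contains some $u'\ge u$. The other cases are symmetric.

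Necessity still has two holes. First, your claim that a voter with $t\in F(P)$ cannot gain under $l$ is false. Suppose $l(P)<t\le u(P)$ and $l(P)\,P_i\,u(P)$. A report that moves $l$ into $(l(P),t]$ and keeps $u$ fixed leaves both the best element $t$ and the worst element $u(P)$ unchanged, so neither half is violated. Only a preference with the same peak that ranks $u(P)$ above $l(P)$ can rule this out. Second, every such switch to $\tilde P_i$ presupposes $F(\tilde P_i,P_{-i})=F(P)$, that is, peak-onlyness, which is not among the axioms.

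Establishing peak-onlyness first, or finding an argument that avoids it, is the missing idea; this is not a routine case split. Once peak-onlyness is available, the case $t<l(P)$ needs only the optimistic half, as you say. For $t>l(P)$, switch to a preference with peak $t$ that ranks every alternative right of $t$ above every alternative left of $t$. Then any rightward move of $l$, which is what a gain requires here, violates the pessimistic half.
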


\section{Robustness}\label{sec_robust}
As discussed in the previous section, the axiomatic analysis of SCCs in the single-peaked domain has primarily focused on the design of rules that are robust to strategic manipulation. 
In contrast, we investigate several robustness criteria grounded in the concept of statistical robustness and aim to design a voting rule that satisfies these criteria.

The following two robustness criteria are the key concepts of our analysis and were informally illustrated in the Introduction.
\textit{Peak-robustness} requires that the chosen outcome not be a majority loser for \emph{any} preference profile consistent with the observed peaks, thereby ruling out the worst-case scenario.
As a weaker requirement, \textit{weak peak-robustness} requires only that the chosen alternative not lose to another alternative for \emph{all} preference profiles consistent with the observed peaks, thereby ruling out only more severe worst-case scenarios.
Our robustness criteria are conceptually inspired by the logic of the Condorcet committee introduced by \cite{fishburn1981} and \cite{Gehrlein1985}.\footnote{See also \citet{kamwa2018coincidence} for a recent study in this topic.}
Condorcet committee is defined as a committee of a fixed size that is preferred by a majority of voters to every other committee of the same size. 
While our research departs from this model in that it does not seek to elect a committee of a predetermined size, it shares the same normative commitment to the majority principle as a benchmark for collective choice.

\begin{itemize}
\item[] \textbf{Peak-robustness}:  For any $P\in \mathcal{SP}^n$ and $x\in X$, if there exist $a \in F(P)$ and $P^\prime \in \mathcal{SP}^n$ with $\tau(P')=\tau(P)$ such that $ |\{i\in N\mid xP_i^\prime a\}| > |\{i\in N\mid aP_i^\prime x\}|$, then $x\in F(P)$.

\item[] \textbf{Weak peak-robustness}:  For any $P\in \mathcal{SP}^n$ and $x\in X$, if there exist $a \in F(P)$ such that $ |\{i\in N\mid xP_i^\prime a\}| > |\{i\in N\mid aP_i^\prime x\}|$ for any $P^\prime \in \mathcal{SP}^n$ with $\tau(P')=\tau(P)$, then $x\in F(P)$.
\end{itemize}

In the strict and single-peaked preference domain, since an undominated alternative indeed dominates other alternatives in terms of the pairwise-majority criterion, (weak) peak-robustness can be interpreted as the criterion of a robust Condorcet committee.
We show the following implication of the robustness criteria, which is useful to show our main results.

\begin{lemma}\label{lem_condorcet_convex}
$F$ satisfies \textit{weak peak-robustness} if and only if $F$ satisfies \textit{set-valued Condorcet consistent} and \textit{convex}, where the former axiom is defined as follows.

\begin{itemize}
\item[] \textbf{Set-valued condorcet consistency}: A SCC, $F$, is said to be \emph{set-valued condorcet consistent} if for any $P\in \mathcal{SP}^n$, 
\[
C(P)\in F(P).
\]

\end{itemize}
\end{lemma}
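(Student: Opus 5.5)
The plan is to first reduce weak peak-robustness to a purely ordinal condition on the peak profile, and then read off both directions of Lemma \ref{lem_condorcet_convex}. Throughout, I use the fact stated in Section \ref{sec_voting} that $C(P)=\text{med}\{\tau(P)\}=\tau_{(m+1)}(P)$.

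\textbf{Step 1 (key step).} Fix $P$ and $a\neq x$, and take $a<x$ without loss of generality; the case $x<a$ is symmetric. I classify voters by their peak $t=\tau(P_i)$ and ask how a consistent profile can order $a$ and $x$.
\begin{itemize}
\item[] If $t\ge x$, single-peakedness forces $x P_i' a$ for every $P'$ with $\tau(P')=\tau(P)$.
\item[] If $t\le a$, it forces $a P_i' x$.
\item[] If $a<t<x$, both orderings are realizable by single-peaked preferences with peak $t$. For instance, one can rank all alternatives to the left of $t$ above all those to the right, or the reverse; both interleavings are single-peaked.
\end{itemize}
The coordinates of $P'$ can be chosen independently across voters. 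Hence $x$ beats $a$ for \emph{every} consistent $P'$ exactly when $x$ wins in the worst case, where every intermediate voter prefers $a$. Since $n$ is odd, this means $|\{i:\tau(P_i)\ge x\}|\ge m+1$, which is equivalent to $x\le \tau_{(m+1)}(P)=C(P)$. Symmetrically, for $x<a$ the condition is $x\ge C(P)$.

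So weak peak-robustness is equivalent to the following requirement: for every $a\in F(P)$, every $x$ with $a<x\le C(P)$ and every $x$ with $C(P)\le x<a$ belongs to $F(P)$. In other words, $F(P)$ must contain the whole interval between $a$ and $C(P)$. I expect verifying the realizability claim for intermediate voters to be the only delicate point, and it is routine.

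\textbf{Step 2 (only if).} Since $F(P)$ is nonempty, pick some $a\in F(P)$.
\begin{itemize}
\item[] If $a=C(P)$, Condorcet consistency holds immediately.
\item[] If $a<C(P)$, apply Step 1 with $x=C(P)$ to get $C(P)\in F(P)$.
\item[] If $a>C(P)$, the symmetric case of Step 1 again gives $C(P)\in F(P)$.
\end{itemize}
For convexity, let $a<b<c$ with $a,c\in F(P)$. If $b\le C(P)$, then $a<b\le C(P)$, so $b\in F(P)$ by Step 1 applied to $a$. If instead $b> C(P)$, then $C(P)<b<c$, so $b\in F(P)$ by Step 1 applied to $c$.

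\textbf{Step 3 (if).} Suppose $F$ is set-valued Condorcet consistent and convex. Take $a\in F(P)$ and $x$ with $a<x\le C(P)$. Both $a$ and $C(P)$ lie in $F(P)$, so convexity gives $x\in F(P)$; this includes the endpoint case $x=C(P)$ directly. The case $C(P)\le x<a$ is symmetric. By the characterization in Step 1, this is exactly weak peak-robustness.
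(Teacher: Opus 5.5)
Your proof is correct and uses the same ingredients as the paper's. Single-peakedness fixes the $a$-versus-$x$ comparison for voters whose peaks lie outside the open interval between $a$ and $x$, leaves it free for the others, and the median supplies the $m+1$ count. The difference is only organizational: you state the resulting exact characterization once (every same-peak profile has $x$ beating $a$ if and only if $a<x\le C(P)$ or $C(P)\le x<a$) and read off both directions, whereas the paper uses the sufficiency half directly for Condorcet consistency and convexity, and the necessity half inside a contradiction argument for the converse.
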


\begin{proof}
We first show if part. Take any $P \in \mathcal{SP}^n$.
First, we show that $F$ satisfies \textit{set-valued Condorcet consistency}.
Since $F(P) \neq \emptyset$, for any $a \in F(P)\setminus \{C(P)\}$ and $P' \in \mathcal{SP}^n$ with $\tau(P')=\tau(P)$, we must have $|\{i \in N \mid C(P) P'_i a\}|=|\{i \in N\mid \text{med}\{\tau(P')\} P'_i a\}| > |\{i \in N \mid a P'_i \text{med}\{\tau(P')\}\}|=|\{i \in N\mid a P'_i  C(P)\}| $.
Therefore, by \textit{weak peak-robustness}, we have $C(P) \in F(P)$, so that \textit{set-valued Condorcet consistency} holds.

Next, we show that $F$ satisfies \textit{convexity}.
Suppose that $a,c \in F(P)$ with $a<b<c$.
If $a<c\le C(P)$, then $b P_i a$ for any $i \in N$ with $\tau(P_i) \ge C(P)$.
Since $C(P)=\text{med}\{\tau(P)\}$, we have  $|\{i \in N \mid b P'_i a\}| \ge m+1 > |\{i \in N \mid a P'_i b\}|$ for any $P' \in \mathcal{SP}^n$ with $\tau(P')=\tau(P)$.
Hence, by \textit{weak peak-robustness}, we have $b \in F(P)$.
Symmetrically, if $C(P) \le a<c$, we can see that $b \in F(P)$.
If $a<C(P)<c$, note that $b\le C(P)$ or $b>C(P)$.
Hence, for any $P' \in \mathcal{SP}^n$ with $\tau(P')=\tau(P)$,  by the same arguments in the above, we have  $|\{i \in N \mid b P'_i a\}| > |\{i \in N \mid a P'_i b\}|$ in the former case and $|\{i \in N \mid b P'_i c\}| > |\{i \in N \mid c P'_i b\}|$ in the latter case.
Therefore, by \textit{weak peak-robustness}, we have $b \in F(P)$ for both cases.
Summarizing the arguments,  \textit{convexity} holds.

We now prove only if part.
Suppose that $F$ satisfies \textit{set-valued Condorcet consistency}  and \textit{convexity} but $F$ is not \textit{weak peak-robust}.
If $F$ does not satisfy \textit{weak peak-robustness}, then there exist  $P \in \mathcal{SP}^n$ and  $x \in X$ such that there exists $a \in F(P)$ such that $|\{i\in N\mid xP_i^\prime a\}| > |\{i\in N\mid aP_i^\prime x\}|$ for any $P^\prime \in \mathcal{SP}^n$ with $\tau(P')=\tau(P)$,  and yet $x\notin F(P)$.
If  $a=C(P)=\text{med}\{\tau(P)\}$, then  there is no $x \in X$ such that $|\{i \in N \mid x P_i a \}| > |\{i \in N \mid a P_i x\}|$ by the definition of the \textit{Condorcet winner}.
Therefore, such $a \neq C(P)$.

Without loss of generality, assume that $a<C(P)$.
If $x<a$, by the same arguments in the proof of if part, then $|\{i \in N \mid x P_i a\} | < |\{i \in N \mid a P_i x\}|$ for all $P \in \mathcal{SP}^n$, so there is no $x (<a)$ such that $|\{i \in N \mid x P_i a\} | > |\{i \in N \mid a P_i x\}|$ for all $P \in \mathcal{SP}^n$.
If $x>C(P)$, there exists $P^\prime$ such that for all $i \in N$ whose peaks are to the left of or equal to $C(P)$, $a P_i x$.
In this case, $|\{i \in N \mid x P_i a \}| < m+1 \le |\{i \in N \mid a P_i x\}|$, so  there is no $x (<a)$ such that $|\{i \in N \mid x P_i a \}| > |\{i \in N \mid a P_i x\}|$ for all $P \in \mathcal{SP}^n$.
Therefore, such $x$ is located between $a$ and $C(P)$, i.e., $a \le x \le C(P)$.
However, $F$ satisfies \textit{set-valued Condorcet consistency} and \textit{convexity}, so if $a \in F(P)$, for all alternatives that are located between $a$ and $C(P)$ also belong to $F(P)$.
Therefore, if such $x$ is located between $a$ and $C(P)$, this violates \textit{set-valued Condorcet consistency} and \textit{convexity}.
Hence, $F$ satisfies \textit{weak peak-robustness}.
\end{proof}

The following axiom, \textit{tail-invariance}, is a new concept to capture the idea that an SCC should be invariant under the perturbation of extreme information.
In statistical contexts, outliers are typically identified through criteria derived endogenously from the data.
In our setting, however, the set of alternatives is not an interval of real numbers, and the ``distances'' between alternatives are not necessarily uniform.
Consequently, applying standard statistical criteria is inappropriate.
We therefore simplify the treatment by disregarding the observations at both ends as outliers.
To avoid arbitrariness in determining the number of excluded observations, we impose the weakest requirement: we adopt an axiom that removes exactly one observation from each extreme.
This idea is inspired by the concept of the ``trimmed mean,'' which calculates the average after discarding extreme data points from both ends. 
A prominent example of research on trimmed means within mechanism design is \cite{Louis2023trimming}, and the specific case where the single highest and lowest reports are excluded is known as the ``Olympic mean." 
While the primary purpose of the trimmed mean is to eliminate extreme opinions, and although this study does not explicitly discuss the trimmed mean itself, both share the underlying spirit of ensuring that the collective decision does not unduly depend on extreme reports.
By definition, one can easily verify that the following axiom implies \textit{peak-onlyness}, which requires that the choice depend solely on the profile of peak alternatives and can be interpreted as an informational efficiency condition.

\begin{itemize}
\item[]  \textbf{Tail invariance}: For any $P, P' \in \mathcal{SP}^n$, if $\tau(P) \setminus \{\underline \tau(P), \overline \tau(P)\}=\tau(P') \setminus \{\underline \tau(P'), \overline \tau(P')\}$, then $F(P)=F(P')$. 

\item[] \textbf{Peak-onlyness}: For any $P,P^\prime \in \mathcal{SP}^n$ such that $\tau(P_i)=\tau(P_i^\prime)$ for all $i\in N$, $F(P)=F(P^\prime)$.
\end{itemize}

\section{Main results}\label{sec_main}

We first show that \textit{peak-robustness} is quite a strong requirement, and we can eventually obtain the median voter rules as the unique rule that meets the criterion with \textit{efficiency} and \textit{tail invariance}.

\begin{theorem}\label{thm_median}
$F$ satisfies \textit{efficiency}, \textit{peak-robustness}, and \textit{tail invariance} if and only if $F$ is the median voter rule.
\end{theorem}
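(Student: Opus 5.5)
The "if" direction is a direct verification, and the "only if" direction shows that peak-robustness already pins $F$ down to a narrow interval around the median, after which efficiency and tail invariance shrink it to the single point $\text{med}\{\tau(P)\}$.

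\emph{If part.} Let $F(P)=\{\text{med}\{\tau(P)\}\}=\{C(P)\}$.
\begin{itemize}
\item[] \textbf{Efficiency.} The median peak lies in $[\underline\tau(P),\overline\tau(P)]$. No alternative in that interval is Pareto dominated, since any move away from it hurts the voter whose peak is at the corresponding end.
\item[] \textbf{Tail invariance.} Deleting the minimum and maximum of an odd-sized multiset leaves its median unchanged. So two profiles whose trimmed peak multisets coincide have the same median.
\item[] \textbf{Peak-robustness.} The only $a\in F(P)$ is $C(P)=\text{med}\{\tau(P')\}$ for every $P'$ with $\tau(P')=\tau(P)$. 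Single-peakedness gives $m+1$ voters (those on the far side of the median) who strictly prefer $C(P)$ to any $x\neq C(P)$. Hence no $x\neq C(P)$ beats $a$ in any such $P'$, and the premise of the axiom only ever triggers for $x=a$, which is trivially in $F(P)$.
\end{itemize}

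\emph{Only if part.} First, peak-robustness implies weak peak-robustness. By Lemma \ref{lem_condorcet_convex}, $F(P)$ is therefore an interval containing $C(P)$. Tail invariance implies peak-onlyness, so $F(P)$ depends only on the peak multiset.

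The key step is to show that peak-robustness forces $F(P)$ to be closed under ``possible majority defeat''. Suppose $a\in F(P)$ with $a<C(P)$, and consider a neighbouring alternative $x<a$. If at least $m+1$ peaks can be arranged to prefer $x$ to $a$, then $x\in F(P)$. This happens precisely when the voters with peak $\le a$, together with voters with peak $>a$ whose preferences can be chosen to rank $x$ above $a$, form a majority. The latter set is the voters whose peak lies strictly between $a$ and some point, which is always possible when $a$ is not their peak. So $a\in F(P)$ spreads outward, and conversely efficiency blocks anything outside $[\underline\tau(P),\overline\tau(P)]$.

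This suggests the following contradiction argument. Start with $C(P)\in F(P)$ and suppose some $a\neq C(P)$ also lies in $F(P)$, say $a<C(P)$. Apply peak-robustness to the pair $(C(P),x)$: choosing $P'$ in which every voter with peak $>C(P)$ ranks some $x<C(P)$ above $C(P)$ yields at most $m$ supporters, so this alone is not enough. Instead I would iterate from $a$. Because $a\in F(P)$, every $x$ that is a possible majority winner against $a$ must be included. Taking $x<a$ adjacent, the voters with peak $\ge a$ excluding those whose peak equals $a$ can be made to prefer...

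In fact, the cleanest route uses tail invariance to move peaks. I would pick a profile $\tilde P$ with the same trimmed multiset as $P$ but with the extreme peaks relocated. Moving $\underline\tau$ to $a_1$ and $\overline\tau$ to $a_K$ does not change $F$. With $a<C(P)$ in $F$, peak-robustness then lets me choose $P'$ so that the voters with peaks between $a$ and the top prefer a far-left alternative, which would drive $a_1$ into $F(\tilde P)$. Alternatively, moving $\overline\tau$ down onto $a$ could make $a$ itself Pareto dominated, contradicting efficiency once $F$ is shown invariant. The main obstacle is orchestrating these peak relocations so that (i) the trimmed multiset is preserved, (ii) the resulting interval $F$ is forced, via peak-robustness, to contain an alternative outside $[\underline\tau,\overline\tau]$, and (iii) efficiency then yields the contradiction. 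I expect a careful case split on whether $a$ equals one of the peaks and on the cases $m=1$ versus $m\ge 2$.
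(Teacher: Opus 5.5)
Your ``if'' direction is fine, but the ``only if'' direction is not established, and its central step runs the wrong way. Suppose $x<a<C(P)$. Each of the at least $m+1$ voters with peak $\ge C(P)$ has $a$ between $x$ and their peak, so prefers $a$ to $x$ in every $P'$ with $\tau(P')=\tau(P)$. Hence $x$ never beats $a$, and $F(P)$ does not spread leftward from $a$. For the same reason, no voter with peak above $a$ can be made to rank a far-left alternative over $a$. Even $a_1\in F(\tilde P)$ would not contradict efficiency once $\underline\tau(\tilde P)=a_1$. Nor can $\overline\tau$ be moved down onto $a$: tail invariance only lets you place the extreme peaks in $[a_1,\tau_{(2)}(P)]$ and $[\tau_{(n-1)}(P),a_K]$, and $\tau_{(n-1)}(P)\ge C(P)>a$. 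What peak-robustness actually gives is spreading to the right. Those same $m+1$ voters can be made to rank any $b>a$ above $a$, so $[a,a_K]\subseteq F(P)$. If moreover $C(P)<a_K$, the mirror argument puts all of $X$ in $F(P)$. Add ``efficiency forces $\underline\tau(P)=a_1$, $\overline\tau(P)=a_K$, then relocate an extreme peak by tail invariance,'' and you have exactly the paper's proof and the orchestration (i)--(iii) you describe.

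The trouble is that step (iii) cannot be executed in general, and no case split will rescue it. Efficiency transported through tail invariance yields only $F(P)\subseteq[\tau_{(2)}(P),\tau_{(n-1)}(P)]$. For $n\ge 5$, let $F(P)=X$ if $\tau_{(2)}(P)=a_1$ and $\tau_{(n-1)}(P)=a_K$, and $F(P)=\{\text{med}\{\tau(P)\}\}$ otherwise. Under the multiset reading of tail invariance that you and the paper use, this $F$ satisfies all three axioms:
\begin{itemize}
\item[] \textbf{Tail invariance.} $\tau_{(2)},\tau_{(m+1)},\tau_{(n-1)}$ are order statistics of the trimmed multiset.
\item[] \textbf{Efficiency.} $F(P)=X$ only when $\underline\tau(P)=a_1$ and $\overline\tau(P)=a_K$.
\item[] \textbf{Peak-robustness.} This holds trivially when $F(P)=X$. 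Otherwise the median peak is the Condorcet winner at every profile with those peaks.
\end{itemize}
Yet at peaks $(a_1,a_1,a_2,a_K,a_K)$ (with $n=5$, $K\ge 3$) it selects $X$, not $\{a_2\}$. So the ``only if'' direction holds for $n=3$ but fails for $n\ge 5$. For $n=3$ the trimmed multiset is $\{C(P)\}$, so $F(P)$ equals $F$ at the unanimous profile with peak $C(P)$, which efficiency alone makes $\{C(P)\}$.

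The paper's proof has the same gap. It invokes a $P'$ with the same trimmed peaks and $\underline\tau(P')\neq a_1$ or $\overline\tau(P')\neq a_K$, which does not exist when $\tau_{(2)}(P)=a_1$ and $\tau_{(n-1)}(P)=a_K$. Its first step (a non-singleton $F(P)$ must equal $X$) also fails when $C(P)\in\{a_1,a_K\}$. For instance, setting $F(P)=[\tau_{(2)}(P),a_K]$ whenever $C(P)=a_K$, and the median otherwise, satisfies all three axioms. A correct statement needs an extra hypothesis or a weaker conclusion.
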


While the median voter rule is conventionally discussed primarily in terms of its strategy-proofness, the key significance of this result lies in the fact that we do not explicitly require strategy-proofness to characterize it. 
Instead, even from a statistical robustness perspective--independent of strategy-proofness--our findings uniquely distinguish the median voter rule within the vast landscape of social choice correspondences.

Next, we examine the implications of \textit{weak peak-robustness} in conjunction with other desirable properties. 
In contrast to Theorem \ref{thm_median}, which uniquely characterizes the median voter rule, we obtain a broader class of SCCs.
Having relaxed the robustness requirement to \textit{weak peak-robustness}, we consider the class of SCCs satisfying additional requirements for endpoints, \textit{endpoint invariance}, and \textit{peak monotonicity}, each of which axioms is motivated by the role of the endpoints as thresholds determining which opinions are discarded.

The first axiom requires that the thresholds (i.e., $l(P)$ and $u(P)$) be independent of the particular distribution of observations lying outside them. 
Intuitively, this means that discarded observations--those below the lower endpoint or above the upper endpoint--affect the outcome only through their number, not through their individual values. 
Without this requirement, the outcome could vary with the distribution of these tail observations even when their total number remains unchanged.

\begin{itemize}
\item[] \textbf{Endpoint invariance}: For any $P, P^\prime \in \mathcal{SP}^n$ and any $\gamma(\cdot) \in \{l( \cdot),u( \cdot)\}$, if
\[
|\{ i \in N \mid \tau(P_i) \lessgtr  \gamma(P)\}| = |\{ i \in N \mid \tau(P_i^\prime) \lessgtr \gamma(P) \}|,
\]
then $\gamma(P)=\gamma(P^\prime)$.
\end{itemize}

Although \textit{endpoint invariance} is related to \textit{tail invariance}, the two axioms are independent and capture distinct statistical considerations. 
\textit{Tail invariance} is primarily concerned with the influence of outliers, namely observations located at the extreme ends of the data. 
\textit{Endpoint invariance}, by contrast, concerns observations that are excluded by the selected thresholds. 
Such observations need not be extreme outliers; they are simply those that fall outside the chosen interval. 
Since the distance between alternatives is not commonly understood by the social planner and the voters, making the thresholds depend on the detailed distribution of discarded opinions may lead to dissatisfaction among voters. To avoid this issue, we require that the endpoints depend solely on the number of voters on either side of them, thereby ensuring that they are determined only by information commonly observable to all agents.

We also require the endpoints to satisfy a natural monotonicity condition. 
The following axiom requires directional consistency between changes in an individual's preference and the resulting movement of the socially selected endpoints. 
Specifically, if all other agents' preferences remain fixed and a single agent's peak shifts to the right, then both the lower endpoint $l(P)$ and the upper endpoint $u(P)$ should move weakly to the right as well; they should not move in the opposite direction.

Note that \textit{Peak monotonicity} is implied by \textit{optimistic and pessimistic strategy-proofness} together with \textit{peak-onlyness}.\footnote{This implication is well known for social choice functions. See Lemma 1 of \cite{klaus2020strategy} for a formal proof.}

\begin{itemize}
\item[] \textbf{Peak-monotonicity}: For any $i \in N$ and any $P, P^\prime \in \mathcal{SP}^n$ such that $P_{-i} = P^\prime_{-i}$, if $\tau(P_i) \le \tau(P^\prime_i)$, then $l(P) \le l(P^\prime)$ and $u(P) \le u(P^\prime)$.
\end{itemize}

\begin{theorem}\label{thm_quan}
 $F$ satisfies  \textit{anonymity}, \textit{efficiency}, \textit{weak peak-robustness}, \textit{tail invariance}, \textit{endpoint invariance} and \textit{peak-monotonicity} if and only if  $F$ is a quantile rule.    
\end{theorem}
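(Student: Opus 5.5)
\emph{Sufficiency.} Fix $Q_L,Q_U\in\{1,\ldots,m\}$ and put $F=F^{Q_L,Q_U}$. The six axioms are checked one at a time.

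\begin{itemize}
\item[] \emph{Anonymity and peak-monotonicity.} Order statistics are symmetric in the voters and weakly increasing in each coordinate.
\item[] \emph{Efficiency.} $F(P)\subseteq[\underline\tau(P),\overline\tau(P)]$, and on the single-peaked domain the efficient alternatives are exactly this interval.
\item[] \emph{Weak peak-robustness.} We use Lemma \ref{lem_condorcet_convex}. $F(P)$ is an interval, so it is convex. Since $Q_L+1\le m+1\le n-Q_U$, it contains $\tau_{(m+1)}(P)=C(P)$.
\item[] \emph{Tail invariance.} Because $Q_L,Q_U\ge 1$, the endpoints $\tau_{(Q_L+1)}$ and $\tau_{(n-Q_U)}$ are order statistics of index between $2$ and $n-1$. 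These are determined by the multiset $\tau(P)\setminus\{\underline\tau(P),\overline\tau(P)\}$.
\item[] \emph{Endpoint invariance.} Read the hypothesis as equality of the numbers of peaks strictly below and strictly above $\gamma(P)$. At $P$ there are at most $Q_L$ peaks strictly below $l(P)$ and at most $n-Q_L-1$ strictly above it. The same then holds at $P'$, which forces $\tau_{(Q_L+1)}(P')=l(P)$. The argument for $u$ is symmetric.
\end{itemize}

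\emph{Necessity.} Let $F$ satisfy the six axioms.

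\emph{Step 1 (interval structure).} By Lemma \ref{lem_condorcet_convex}, $F(P)=[l(P),u(P)]$ with $l(P)\le C(P)\le u(P)$. Tail invariance implies peak-onlyness. Together with anonymity, $F(P)$ depends only on the multiset $\tau(P)$. Efficiency gives $\underline\tau(P)\le l(P)$ and $u(P)\le\overline\tau(P)$.

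\emph{Step 2 (trimming at least one voter).} Suppose $l(P)=\tau_{(1)}(P)<\tau_{(2)}(P)$. Move the lowest peak up to $\tau_{(2)}(P)$. The trimmed profile is unchanged, so tail invariance keeps $F$ fixed. But now $l$ lies strictly below every peak, contradicting efficiency. Hence $l(P)\ge\tau_{(2)}(P)$, and symmetrically $u(P)\le\tau_{(n-1)}(P)$.

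\emph{Step 3 (endpoints are peaks).} Suppose $l(P)$ is not a peak. Use endpoint invariance to slide the peaks that lie just below $l(P)$ upward, keeping them strictly below $l(P)$ and keeping the counts on each side fixed; $l$ does not change. Then compare with the profile in which one of these peaks is moved onto $l(P)$. Peak-monotonicity bounds $l$ from both sides, and Step 2 applied to the trimmed profiles should show that $l$ must coincide with an order statistic.

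\emph{Step 4 (constant index).} Write $l(P)=\tau_{(s(P))}(P)$, with $s(P)$ chosen as the smallest index attaining $l(P)$. The aim is to show that $s(P)$ does not depend on $P$. By endpoint invariance, $l$ depends only on how many peaks lie below, at, and above it. This lets us move any profile, one voter at a time, to a canonical profile whose peaks lie only at $a_1$, $a_2$ and $a_K$ (enlarging $X$ is not allowed, so $K\ge 3$ needs care and the case $K\le 2$ is handled directly).

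On these canonical profiles, peak-monotonicity applied to single-voter moves implies that the index at which $l$ sits can neither jump up nor jump down. Anonymity makes the index a function of counts only. Hence $s$ is a constant $Q_L+1$. Step 2 gives $Q_L\ge 1$, and $l\le C(P)$ gives $Q_L\le m$. The symmetric argument yields $u(P)=\tau_{(n-Q_U)}(P)$.

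Steps 3 and 4 are the main obstacle. Turning endpoint invariance and monotonicity into a profile-independent index requires a careful path of one-voter peak moves, with bookkeeping of ties at the endpoint. I would try first to handle profiles whose peaks are all distinct, and then extend to tied profiles by endpoint invariance.
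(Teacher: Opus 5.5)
Your sufficiency half is correct, and so are Steps 1 and 2 of necessity. In Step 2 the same move works whenever $l(P)<\tau_{(2)}(P)$, which is what actually gives $l(P)\ge\tau_{(2)}(P)$. The gap is Steps 3 and 4: they are plans rather than arguments, and they cannot be completed, because the claim of Step 3 is false under the six axioms. This holds with the axioms read exactly as you read them: tail invariance deletes one copy of each extreme peak, and endpoint invariance matches both strict counts. Take $n=5$, $K\ge 3$, fix $1<t<K$, and set
\[
F(P)=\bigl[\min\{\tau_{(3)}(P),\max\{\tau_{(2)}(P),a_t\}\},\ \tau_{(3)}(P)\bigr].
\]
The lower endpoint is the fifth smallest of $\tau_{(1)}(P),\ldots,\tau_{(5)}(P),a_1,a_1,a_t,a_K$. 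The upper endpoint is the fifth smallest of $\tau_{(1)}(P),\ldots,\tau_{(5)}(P),a_1,a_1,a_K,a_K$. So $F$ is a generalized median correspondence with two $\alpha$'s at $a_1$ and two $\beta$'s at $a_K$, and it satisfies all six axioms:
\begin{itemize}
\item[] it is endpoint invariant by the counting argument of Lemma \ref{lem_gmc};
\item[] it is anonymous and peak-monotone, being a monotone symmetric function of order statistics;
\item[] it is efficient, since $F(P)\subseteq[\tau_{(2)}(P),\tau_{(3)}(P)]$;
\item[] it is weakly peak-robust by Lemma \ref{lem_condorcet_convex}, since it is an interval containing $C(P)=\tau_{(3)}(P)$;
\item[] it is tail invariant, since it depends only on $\tau_{(2)}(P)$ and $\tau_{(3)}(P)$.
\end{itemize}
Yet at peaks $(a_1,a_1,a_K,a_K,a_K)$ it selects $[a_t,a_K]$, whose lower endpoint is not a peak, so $F$ is not a quantile rule. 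The index in Step 4 is not constant either: $l(P)=\tau_{(2)}(P)$ when $\tau_{(2)}(P)\ge a_t$, and $l(P)=\tau_{(3)}(P)$ when $\tau_{(3)}(P)\le a_t$. For any $n=2m+1\ge 5$, putting $m$ $\alpha$'s at $a_1$, one at $a_t$ and $m-1$ at $a_K$ gives $l(P)=\min\{\tau_{(m+1)}(P),\max\{\tau_{(m)}(P),a_t\}\}$, with the same properties.

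So no bookkeeping of one-voter moves can rescue Steps 3 and 4. The only-if direction of the statement itself fails for $n\ge 5$ and $K\ge 3$; it does hold when $n=3$ or $K\le 2$.

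The paper's proof takes a different route. Endpoint invariance plus peak-monotonicity make $l$ and $u$ strategy-proof. Moulin's theorem then turns them into generalized medians with at least $m$ phantoms at $a_1$ (resp.\ $a_K$). Tail invariance is then supposed to push every remaining phantom to $a_1$ or $a_K$, and the proof breaks at exactly that last step. The profile it invokes requires $2m$ phantoms strictly below $a_t$ while some $\alpha_j=a_t$, which is impossible. Moreover, tail invariance only excludes interior phantoms that can become pivotal against $\tau_{(1)}$ or $\tau_{(n)}$.

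The route through Lemma \ref{lem_gmc} is the right tool for getting as far as generalized medians. To reach quantile rules, though, you need an additional axiom that forces the endpoints to be peaks, not a sharper argument from the six axioms as stated.
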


Note that the quantile rule does not necessarily treat alternatives symmetrically, depending on the values of the parameters $Q_L$ and $Q_U$.
However, it should be noted that we do not always need to treat alternatives with strict symmetry.
Consider, for instance, a social choice setting such as determining a tax rate.
In this scenario, it may be natural for a social planner to desire as high a tax rate as possible, even while taking voters' opinions into account. 
In such cases, rather than treating all opinions symmetrically, a voting rule that places greater weight on upper-side opinions for the sake of tax-funded social administration would be well-justified.
In standard political contexts, however, treating leftist and rightist ideologies asymmetrically would be unacceptable.

By imposing additional symmetric property, \textit{neutrality under reversal}, we can obtain symmetric quantile rules as special cases.
The following axiom is based on the one by \cite{kiguchi2025may}.
Let the reflection mapping $\rho: X \to X$ be such that $\rho(a_u) = a_{K-u+1}$ for all $u\in \{1,\ldots,K\}$.
For any $P_i\in \mathcal{SP}$, let $P_i^-$ be defined as 
\[
xP_iy \Leftrightarrow \rho(x)P_i^-\rho(y), \forall x,y \in X.
\]
Let $P^-=(P_1^-,\ldots, P_n^-)$.
For any subset of alternatives $A \in 2^X\setminus\{\emptyset\}$, let $A^- = \{ \rho(a) \mid a \in A \}$.

Neutrality under reversal requires that the chosen outcome change symmetrically when all voters reverse their preferences.

\begin{itemize}
\item[] \textbf{Neutrality under reversal}: For any $P\in \mathcal{SP}^n$,
\[
F(P^-)=F(P)^-.
\]
\end{itemize}

\begin{proposition}\label{prop_symquan}
$F$ satisfies \textit{efficiency}, \textit{anonymity}, \textit{weak peak-robustness}, \textit{tail invariance}, \textit{endpoint invariance}, \textit{peak-monotonicity} and \textit{neutrality under reversal} if and only if  $F$ is a symmetric quantile rule. 
\end{proposition}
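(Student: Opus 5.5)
The proof will go through Theorem \ref{thm_quan}: first show that every symmetric quantile rule satisfies all the axioms, and then show that neutrality under reversal forces $Q_L=Q_U$ within the quantile class.

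\emph{If part.} A symmetric quantile rule is a quantile rule, so Theorem \ref{thm_quan} already gives efficiency, anonymity, weak peak-robustness, tail invariance, endpoint invariance and peak-monotonicity. Only neutrality under reversal remains. Fix $P$ and note the following.
\begin{itemize}
\item[] For each $i$, $\tau(P_i^-)=\rho(\tau(P_i))$.
\item[] $\rho$ reverses the order $<$ on $X$.
\item[] Hence the $p$-th smallest peak of $P^-$ is $\rho$ of the $(n-p+1)$-th smallest peak of $P$, that is, $\tau_{(p)}(P^-)=\rho(\tau_{(n-p+1)}(P))$.
\end{itemize}
Therefore
\[
F^{Q,Q}(P^-)=[\tau_{(Q+1)}(P^-),\tau_{(n-Q)}(P^-)]=[\rho(\tau_{(n-Q)}(P)),\rho(\tau_{(Q+1)}(P))]=F^{Q,Q}(P)^-,
\]
because $\rho$ maps the interval $[a,b]$ onto $[\rho(b),\rho(a)]$. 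The same computation shows that for a general quantile rule, $F^{Q_L,Q_U}(P)^-=F^{Q_U,Q_L}(P^-)$. I will reuse this identity below.

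\emph{Only if part.} By Theorem \ref{thm_quan}, $F=F^{Q_L,Q_U}$ for some $Q_L,Q_U\in\{1,\ldots,m\}$. Neutrality under reversal together with the identity above gives
\[
F^{Q_L,Q_U}(P^-)=F^{Q_U,Q_L}(P^-)\quad\text{for every }P.
\]
Since $P\mapsto P^-$ is a bijection of $\mathcal{SP}^n$, the rules $F^{Q_L,Q_U}$ and $F^{Q_U,Q_L}$ coincide on every profile. It remains to find a profile on which they differ whenever $Q_L\neq Q_U$.

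Suppose, say, $Q_L<Q_U$, and use $K\ge 2$. Take a profile $P$ in which $n-Q_U$ voters have peak $a_1$ and the remaining $Q_U$ voters have peak $a_K$.
\begin{itemize}
\item[] For $F^{Q_L,Q_U}$: the upper endpoint is $\tau_{(n-Q_U)}(P)=a_1$, so $F^{Q_L,Q_U}(P)=\{a_1\}$.
\item[] For $F^{Q_U,Q_L}$: the upper endpoint is $\tau_{(n-Q_L)}(P)$. Since $n-Q_L>n-Q_U$, this order statistic is $a_K$, so $a_K\in F^{Q_U,Q_L}(P)$.
\end{itemize}
This contradicts the equality of the two rules, so $Q_L=Q_U$.

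The only point needing care is the order-statistic reversal identity under $\rho$; the rest is bookkeeping. If the paper intends $K=1$ to be allowed, the two rules are trivially equal, but then any quantile rule is extensionally a symmetric one, so the statement still holds.
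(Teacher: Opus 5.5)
Your proof is correct and follows the same route as the paper. The if part uses the same order-statistic reversal $\tau_{(p)}(P^-)=\rho(\tau_{(n-p+1)}(P))$, and the only-if part likewise reduces to quantile rules via Theorem~\ref{thm_quan} and exhibits a boundary profile at which $Q_L\neq Q_U$ violates neutrality under reversal. Your swap identity $F^{Q_L,Q_U}(P)^-=F^{Q_U,Q_L}(P^-)$ and explicit two-point profile are cleaner than the paper's count-based construction in the $(\alpha,\beta)$ parametrization. The paper's stated condition there (the $a_1$-counts summing to $m+1$) does not actually yield $l(P)=a_1$.
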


\section{Concluding remarks}\label{sec_conclusion}
In this paper, we introduced a set of criteria for social choice correspondences inspired by the principles of robust statistics, with the aim of designing voting rules that are robust to both the detailed specification of individual preferences and the presence of extreme opinions. 
We showed that quantile rules are the unique class of voting rules satisfying these criteria.
Since quantiles--most notably the median--are well known as robust estimators in statistics, our results identify their counterparts in social choice theory and demonstrate that they share analogous notions of robustness. 
In this sense, we uncover a novel connection between social choice theory and robust statistics.
We conclude by discussing several related issues and possible directions for future research.

\subsection{Strategy-proofness and generalized median correspondences}

While generalized median correspondences constitute a broad class of social choice correspondences and are characterized by a set-valued notion of strategy-proofness \citep{Klaus2020target}, the quantile rule is characterized as a particular subclass based on robustness considerations.\footnote{Several variants of strategy-proofness for social choice correspondences have been extensively studied in the literature; see, for example, \citet{barbera2001strategy}, \citet{ching2002multi}, \citet{ozyurt2008strategy}, \citet{nehring2000monotonicity}, \citet{yi2007monotonicity}, and \citet{brandt2015set}, and the references therein.}

One might consider that strategy-proofness and robustness are closely related properties. 
However, we show that our robustness concepts and strategy-proofness are logically independent.\footnote{Here we focus on the fact that strategy-proofness does not imply robustness. For the converse direction, see the example establishing the logical independence of \textit{peak monotonicity} in Theorem \ref{thm_quan}. Since \textit{peak monotonicity} is implied by strategy-proofness, that example also demonstrates that robustness does not imply strategy-proofness.}
To illustrate this point, consider another well-known subclass of generalized median correspondences, namely the \textbf{target set correspondence} introduced by \citet{Klaus2020target}. 
This correspondence extends the target rule of \citet{Thomson1993target} to the setting of social choice correspondences and is defined as follows.

\begin{itemize}
    \item[] \textbf{Target set correspondence}: There exist $a ,b \in X (a \le b)$, for any $P\in \mathcal{SP}^n$, \\
    $F^{a,b}(P) = \begin{cases}
\{\underline{\tau}(P)\} & \text{if } b < \underline{\tau}(P), \\
\{\overline{\tau}(P)\} & \text{if } a > \overline{\tau}(P), \\
[a, b] \cap [\underline{\tau}(P),\overline{\tau}(P)] & \text{otherwise.}
\end{cases}$
\end{itemize}

We can see that if $|\{j \in N \mid \alpha_j=a\}|=n-1$ and $|\{j \in N \mid \beta_j=b\}|=n-1$, then an (anonymous and efficient) generalized median correspondence $F^{\alpha,\beta}$ is a target set correspondence.
Among the axioms characterizing the quantile rule, target set correspondences satisfy \textit{anonymity}, \textit{efficiency}, \textit{endpoint invariance}, and \textit{peak monotonicity}. 
However, they generally fail to satisfy \textit{weak peak-robustness} and \textit{tail invariance}.
To see this, note first that a target set correspondence does not necessarily select the median voter's peak. 
Hence, by Lemma \ref{lem_condorcet_convex}, it fails to satisfy \textit{weak peak-robustness}. 
Moreover, \textit{tail invariance} is generally violated as well, since the outcome of a target set correspondence may depend on the precise locations of observations in the tails rather than solely on the interval determined by the remaining observations.

\subsection{Even number of voters}
Throughout the paper, we assume that the number of voters is odd so that the median peak is uniquely defined. When $n=2m$ with $m\ge 1$, however, the median need not be unique. 
Let $\tau_{(m)}(P)$ and $\tau_{(m+1)}(P)$ denote the $m$-th and $(m+1)$-st order statistics of the peak profile, which we call the \emph{left median} and \emph{right median}, respectively.

In this case, any social choice correspondence satisfying
\[
F(P)\subseteq \{\tau_{(m)}(P),\tau_{(m+1)}(P)\}
\]
for every $P\in\mathcal{SP}^n$ satisfies efficiency, peak-robustness, and tail invariance. 
Indeed, neither median can be robustly dominated by another alternative, while any majority comparison between the left and right medians can result at most in a tie. 
Similarly, quantile rules with $Q_L+Q_U\le 2m-1$ satisfy all the axioms in Theorem \ref{thm_quan}.

This observation contrasts with the classical strategy-proofness approach. 
When the number of voters is even, the median voter rule is no longer uniquely defined, and a rule that selects either the left or the right median depending on the preference profile need not be strategy-proof. 
By contrast, robustness considerations continue to identify the set of median alternatives as the relevant class of outcomes. 
Thus, robustness remains useful for characterizing desirable voting rules even when a unique median alternative does not exist.

\subsection{A stronger version of peak-robustness}
Peak-robustness requires that every alternative selected by an SCC be robustly undominated by any unselected alternative.
Strengthening this requirement so that every selected alternative is robustly undominated by \textit{all other} alternatives yields the following stronger notion.

\begin{itemize}
\item[] \textbf{Strong peak-robustness}
For any $P \in \mathcal{SP}^n$ and $x \in F(P)$, there is no $P' \in \mathcal{SP}^n$ with $\tau(P')=\tau(P)$ and $a \neq x$ such that $|\{i \in N \mid a P'_i x \}|>|\{i \in N \mid x P'_i a \}|$.
\end{itemize}

It is easy to see that the median rule satisfies \textit{strong peak-robustness}, since the median coincides with the Condorcet winner on the single-peaked domain.
The converse is also true. 
To see this, suppose that an SCC selects an alternative other than the median. 
Since the median is the Condorcet winner, any such alternative is robustly dominated by the median. 
Hence, no SCC satisfying \textit{strong peak-robustness} can select a non-median alternative. 
It follows that any SCC satisfying \textit{strong peak-robustness} must be single-valued and select the median alternative for every preference profile.\footnote{We thank Yuki Tamura for suggesting a clarification of the implications of this stronger notion of peak-robustness.}

\subsection{Preference domain}
Throughout our analysis, we have focused on the single-peaked domain. 
A natural direction for future research is to identify broader preference domains on which similar results continue to hold.

\citet{barbera2011top} introduced \textit{top monotonicity}, a domain restriction weaker than several classical conditions, including single-peakedness, single-plateauedness, and single-crossingness. 
They show that top monotonicity is sufficient for the existence of a Condorcet winner and, moreover, that the Condorcet winner coincides with a median peak. 
Consequently, on this domain, peak-robustness implies that an SCF must include the median peak. 
This observation suggests that it may be possible to characterize the median voter rule on the top-monotonicity domain using robustness-based axioms analogous to those employed in Theorem \ref{thm_median}.

This possibility stands in sharp contrast to the strategy-proof foundations of median voter rules. \citet{chatterji2023taxonomy} identify the preference domains on which strategy-proof social choice functions satisfy peak-onlyness, a necessary property of generalized median voter schemes, and show that these domains are essentially single-peaked.\footnote{See also \citet{chatterji2011tops} and \citet{weymark2008strategy,weymark2011unified} for comprehensive discussions of the relationship between strategy-proofness and peak-onlyness.} 
Thus, while the strategy-proof foundation of median voter rules appears to be closely tied to single-peakedness, robustness-based foundations may extend to broader domains.
Determining whether robustness characterizations of median voter rules or generalized median correspondences can indeed be established beyond the single-peaked domain remains an interesting direction for future research.

\subsection{Tail invariance and responsiveness}
Our characterization of quantile rules may be viewed as a refinement of the characterization of generalized median correspondences $F^{\alpha, \beta}$: once the latter class has been identified, our axioms uniquely determine the admissible values of the parameters $\alpha$ and $\beta$.
In this sense, our approach is closely related to that of \citet{kiguchi2025may}, who identify conditions that uniquely determine the median voter rule within the class of generalized median voter rules.

Specifically, \citet{kiguchi2025may} introduce the following notion of \textit{weak positive responsiveness}, which requires a voting rule to respond monotonically to voters’ preference shifts in a common direction.
\begin{itemize}
\item[] \textbf{Weak positive responsiveness}:
\begin{enumerate}
  \item For any $P\in \mathcal{SP}^n$ satisfying $\tau(P_i) \le a_{K-1}$ for all $i\in N$, if $l(P), u(P)\le K-1$,
\[
l(P)<l(P^{+1}), u(P)<u(P^{+1}).
\]
  \item For any $P\in \mathcal{SP}^n$ satisfying $\tau(P_i) \ge a_{2}$ for all $i\in N$, if $l(P), u(P)\ge 2$,
\[
l(P)>l(P^{-1}), u(P)>u(P^{-1}).
\]
\end{enumerate}
where, for any $P_i\in \mathcal{SP}$ satisfying $\tau(P_i)\le a_{K-1}$, let $P_i^{+1}$ be defined as
\[
a_u P_i a_v \Leftrightarrow a_{u+1} P^{+1}_i a_{v+1}, \forall u,v\le K-1,~\text{and}~ a_u P_i^{+1} a_1, \forall u\ge 2, 
\]
and, for any $P_i\in \mathcal{SP}$ satisfying $\tau(P_i)\ge a_2$, let $P_i^{-1}$ be defined as
\[
a_u P_i a_v \Leftrightarrow a_{u-1} P^{+1}_i a_{v-1}, \forall u,v\ge 2,~\text{and}~a_u P_i^{+1} a_K, \forall u \le K-1.
\]
\end{itemize}

While a generalized median correspondence does not necessarily satisfy \textit{weak positive responsiveness}, every quantile rule does. 
Moreover, a generalized median correspondence satisfies \textit{weak positive responsiveness} only if $\alpha_j,\beta_j\in\{a_1,a_K\}$ for every $j\in\{1,\ldots,n-1\}$.

To see this, suppose that there exists some $j\in\{1,\ldots,n-1\}$ such that $\alpha_j\in X\setminus\{a_1,a_{K-1},a_K\}$. Let $\alpha_j=a_t$ and consider a profile $P\in\mathcal{SP}^n$ satisfying
\[
|\{i \in N\mid \tau(P_i)<a_t\}|+|\{j \in N\mid \alpha_j\le a_t\}|=m+1.
\]
Then,
\[
l(P)=\operatorname{med}\{\tau(P_1),\ldots,\tau(P_n),\alpha_1,\ldots,\alpha_{n-1}\}=a_t.
\]
Now consider the profile $P^{+1}$. 
By construction, we still have $l(P^{+1})=a_t$, which contradicts \textit{weak positive responsiveness}.
Next, suppose that there exists some $j\in\{1,\ldots,n-1\}$ such that $\alpha_j=a_{K-1}$. Consider a profile $P\in\mathcal{SP}^n$ satisfying
\[
|\{i \in N\mid \tau(P_i)=a_K\}|+|\{j \in N\mid \alpha_j\ge a_K\}|=m+1.
\]
Then
\[
l(P)=\operatorname{med}\{\tau(P_1),\ldots,\tau(P_n),\alpha_1,\ldots,\alpha_{n-1}\}=a_{K-1}.
\]
Now consider the profile $P^{-1}$. 
By construction, we again obtain $l(P^{-1})=a_{K-1}$, contradicting \textit{weak positive responsiveness}. Therefore, $\alpha_j\in\{a_1,a_K\}$ for every $j\in\{1,\ldots,n-1\}$.
 By a symmetric argument applied to $u(P)$, it follows that $\beta_j\in\{a_1,a_K\}$ for every $j\in\{1,\ldots,n-1\}$.

The above argument illustrates that generalized median correspondences that are insensitive to uniform shifts in voters' preferences necessarily rely on fixed boundary parameters and are therefore vulnerable to the influence of extreme opinions. 
While \citet{kiguchi2025may} employ weak positive responsiveness (in its single-valued form) to characterize the median voter rule within the class of generalized median voter rules, our characterization uses \textit{tail invariance} to play a similar role. 
Both axioms serve to eliminate generalized median rules whose outcomes are excessively influenced by boundary parameters rather than by the distribution of voters' peaks.

\begin{center}
\Large{{\bf Appendix}}
\end{center}

\numberwithin{definition}{section}
\numberwithin{theorem}{section}
\numberwithin{lemma}{section}
\numberwithin{proposition}{section}
\numberwithin{corollary}{section}
\numberwithin{example}{section}
\renewcommand{\theequation}{\thesection.\arabic{equation}}
\appendix

\section{Omitted Proofs}
Before presenting the proofs of our main results, we make the following observation regarding \textit{efficiency}.

\begin{lemma}\label{lem_efficient}
$F$ satisfies \textit{efficiency} if and only if $F(P) \subseteq  [\underline \tau(P), \overline \tau(P)]$ for any $P \in \mathcal{SP}^n$.
\end{lemma}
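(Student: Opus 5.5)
We prove the two inclusions separately, using only the definition of \textit{efficiency} and single-peakedness. Recall that \textit{efficiency} excludes from $F(P)$ every alternative $x$ that is Pareto dominated, i.e., for which some $y$ satisfies $yP_i x$ for all $i\in N$. The claim therefore reduces to a description of the Pareto-dominated alternatives: $x$ is Pareto dominated at $P$ if and only if $x\notin[\underline\tau(P),\overline\tau(P)]$.

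For the ``only if'' direction, take $x\in X$ with $x<\underline\tau(P)$ and set $y=\underline\tau(P)$. For every voter $i$ we have $x<y\le\tau(P_i)$, so single-peakedness gives $yP_i x$. Hence $x$ is Pareto dominated by $y$, and \textit{efficiency} yields $x\notin F(P)$. The case $x>\overline\tau(P)$ is symmetric, with $y=\overline\tau(P)$ and $\tau(P_i)\le y<x$. It follows that $F(P)\subseteq[\underline\tau(P),\overline\tau(P)]$.

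For the ``if'' direction, suppose $F(P)\subseteq[\underline\tau(P),\overline\tau(P)]$ for every $P$, and fix $x$ and $y$ with $yP_i x$ for all $i$. We show that $x\notin[\underline\tau(P),\overline\tau(P)]$, which gives $x\notin F(P)$. This step is the only one requiring care.

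Suppose instead that $\underline\tau(P)\le x\le\overline\tau(P)$. Since $y\neq x$, either $y<x$ or $y>x$.

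If $y<x$, consider a voter $j$ with $\tau(P_j)=\overline\tau(P)\ge x$. Since $y<x\le\tau(P_j)$, single-peakedness gives $xP_j y$. This contradicts $yP_j x$, because $P_j$ is asymmetric.

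If $y>x$, the symmetric argument applies to a voter $k$ with $\tau(P_k)=\underline\tau(P)\le x<y$, who then satisfies $xP_k y$.

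Thus no $x$ in the interval is Pareto dominated, so \textit{efficiency} holds.
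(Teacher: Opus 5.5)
Your proof is correct and follows the paper's argument: for the only-if direction the paper also uses $\underline\tau(P)$ (symmetrically $\overline\tau(P)$) as an alternative every voter strictly prefers, by single-peakedness, phrased as a contradiction rather than directly. The paper calls the if direction obvious and gives no argument for it; your argument using the voters whose peaks are $\overline\tau(P)$ and $\underline\tau(P)$ correctly supplies that step.
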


\begin{proof}
If part is obvious, so it suffices to show the only if part.
Suppose that $a \in F(P)$ such that $a \notin [\underline \tau(P), \overline \tau(P)]$ for some $P \in \mathcal{SP}^n$ and $a \in X$.
Without loss of generality, we assume that $a<\underline\tau(P)$.
Then, since each $P_i$ is single-peaked, we have $\underline\tau(P) P_i a$ for any $i \in N$, which contradicts \textit{efficiency}.
\end{proof}

\subsection{Proof of Theorem \ref{thm_median}}
\begin{proof}[Proof of Theorem \ref{thm_median}]
Since if part is obvious, we show the only if part.
Let $F$ be a social choice correspondence that satisfies  \textit{tail invariance}, \textit{peak-robustness}, and \textit{efficiency}.
By Lemma \ref{lem_condorcet_convex}, $F$ must be an interval that contains the Condorcet winner, that is, for any $P \in \mathcal{SP}^n$, there exists  $a(P) \le C(P) \le b(P) \in X$ such that $F(P)=[a(P), b(P)]$.
We show that, if $a(P)<b(P)$, we have $a(P)=a_1$ and $b(P)=a_K$.
Indeed, if $a(P) \neq a_1$, for any $a \in [a_1, a(P)]$, there exists $P'$ with $\tau(P')=\tau(P)$ such that $a P'_i b(P)$ for any $i \in N$ with $\tau_i(P') \le \tau_i(P) \le C(P)$.
Therefore, by \textit{peak-robustness}, we have $a \in F(P)$ for any  $[a_1, a(P)]$.
Symmetrically, we can conclude that $b \in F(P)$ for any $b \in [b(P), a_K]$.

By the above arguments, we have shown that $F(P)=\text{med}\{\tau(P)\}$ or $X$ for any $P \in \mathcal{SP}^n$.
By Lemma \ref{lem_efficient},  if $F(P)=X$ for some $P \in \mathcal{SP}^n$, it must be that $\underline \tau(P)=a_1$ and $\overline \tau(P)=a_K$.
Hence, for $P' \in \mathcal{SP}^n$ with $\tau(P) \setminus \{\underline \tau(P), \overline \tau(P)\}=\tau(P') \setminus \{\underline \tau(P'), \overline \tau(P')\}$ and $\underline \tau(P') \neq a_1$ or $\overline \tau(P') \neq a_K$, we must have
$F(P)=X \neq \text{med}\{\tau(P')\}=F(P')$, which contradicts \textit{tail invariance}.
Therefore, $F(P)=\text{med}\{\tau(P)\}$ for any $P \in \mathcal{SP}^n$, which completes the proof.
\end{proof}

\subsection{Proof of Theorem \ref{thm_quan}}
The proof of Theorem \ref{thm_median} relies on the following lemma, which characterizes the special case of generalized median correspondences.

\begin{lemma}\label{lem_gmc}
$F$ satisfies \textit{anonymity}, \textit{efficiency}, \textit{weak peak-robustness}, \textit{peak-onlyness}, \textit{endpoint invariance} and \textit{peak-monotonicity} if and only if  $F$ is a generalized median correspondence $F^{\alpha, \beta}$ where $m \le |\{j \in N\mid \alpha_j=a_1\}| $ and $m \le |\{j \in N\mid \beta_j=a_K\}|$.
\end{lemma}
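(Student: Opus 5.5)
The plan is to prove the two directions separately. For the ``only if'' direction, the key move is to treat the endpoint maps $l(\cdot)$ and $u(\cdot)$ as two social choice functions and show that each is a generalized median voter rule via Proposition \ref{prop_Moulin}.

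\emph{If part.} Let $F=F^{\alpha,\beta}$ with $m\le|\{j\mid\alpha_j=a_1\}|$ and $m\le|\{j\mid\beta_j=a_K\}|$.
\begin{itemize}
\item \emph{Anonymity and peak-onlyness} are immediate from the median formula.
\item \emph{Efficiency.} $F^{\alpha}(P)$ is the $n$-th smallest of $2n-1$ numbers, $n$ of which are peaks. At most $n-1$ of these numbers lie below $\underline\tau(P)$ and at most $n-1$ lie above $\overline\tau(P)$. Hence $F^\alpha(P),F^\beta(P)\in[\underline\tau(P),\overline\tau(P)]$, and Lemma \ref{lem_efficient} gives efficiency.
\item \emph{Weak peak-robustness.} $F(P)$ is an interval by construction. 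Since at least $m$ phantoms $\alpha_j$ sit at $a_1$, at least $(m+1)+m=n$ of the $2n-1$ values are $\le C(P)$, so $l(P)\le C(P)$. Symmetrically $u(P)\ge C(P)$. Thus $F$ is convex and set-valued Condorcet consistent, and Lemma \ref{lem_condorcet_convex} yields weak peak-robustness.
\item \emph{Peak-monotonicity.} The median is weakly increasing in each argument.
\item \emph{Endpoint invariance.} $F^\alpha(P)=\gamma$ is determined by the counts of peaks (and fixed phantoms) strictly below and strictly above $\gamma$: the median equals $\gamma$ iff fewer than $n$ values lie strictly below $\gamma$ and fewer than $n$ lie strictly above. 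So equal counts for $P'$ give $F^\alpha(P')=\gamma$. The same argument applies to $u$.
\end{itemize}

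\emph{Only if part.} By Lemma \ref{lem_condorcet_convex}, $F(P)=[l(P),u(P)]$ with $l(P)\le C(P)\le u(P)$. I would treat $l$ and $u$ as SCFs and check that each inherits the following properties:
\begin{itemize}
\item anonymity, directly from anonymity of $F$;
\item peak-onlyness, directly from peak-onlyness of $F$;
\item efficiency, since $l(P),u(P)\in[\underline\tau(P),\overline\tau(P)]$ by Lemma \ref{lem_efficient};
\item monotonicity in each voter's peak, from peak-monotonicity.
\end{itemize}
The crucial step is strategy-proofness of $l$ (and likewise $u$), which is not assumed and must be derived.

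I first derive \emph{uncompromisingness} from endpoint invariance. If voter $i$ moves her peak while staying strictly on the same side of $l(P)$, the numbers of peaks strictly below and strictly above $l(P)$ are unchanged, so $l$ is unchanged. Now suppose voter $i$ has $\tau(P_i)<l(P)$ and misreports. There are two cases:
\begin{itemize}
\item If the report stays strictly below $l(P)$, the outcome does not change.
\item If the report is weakly above $l(P)$, peak-monotonicity gives an outcome weakly above $l(P)$. By single-peakedness, that outcome is weakly worse for $i$.
\end{itemize}
The cases $\tau(P_i)>l(P)$ and $\tau(P_i)=l(P)$ are symmetric or trivial. Hence $l$ is strategy-proof. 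Proposition \ref{prop_Moulin} then gives phantoms $\alpha$ with $l=F^{\alpha}$, and similarly $u=F^{\beta}$.

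Next I would pin down the phantoms. Order both phantom vectors increasingly; the requirement $\alpha_j\le\beta_j$ should follow from $l\le u$ evaluated at profiles with $k$ peaks at $a_1$ and $n-k$ at $a_K$, which recover the phantoms. For the count condition, take $P$ with $m+1$ peaks at $a_1$ and $m$ at $a_K$. Then $C(P)=a_1$, so $l(P)=a_1$. In the $2n-1$ values, at least $n=2m+1$ must equal $a_1$, so at least $m$ phantoms equal $a_1$. The symmetric profile yields $m\le|\{j\mid\beta_j=a_K\}|$.

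I expect the main obstacle to be the uncompromisingness step. Endpoint invariance, as written, fixes the reference point $\gamma(P)$ and compares counts at $P'$. One must check carefully that a peak moving strictly within one side keeps both the ``$<$'' and ``$>$'' counts equal. One must also check that the resulting equality $l(P)=l(P')$ is the one Moulin's theorem needs.
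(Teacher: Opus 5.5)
Your proposal is correct and follows essentially the same route as the paper. For the if part, you verify each axiom for $F^{\alpha,\beta}$ via Lemmas \ref{lem_efficient} and \ref{lem_condorcet_convex} and median counting. For the only-if part, you treat $l$ and $u$ as SCFs, derive their strategy-proofness from endpoint invariance plus peak-monotonicity, and invoke Proposition \ref{prop_Moulin}.

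Your version is slightly more careful than the paper's in two places. Your strict-versus-weak case split in the strategy-proofness step avoids applying endpoint invariance when the misreport lands exactly on $l(P)$. You also explicitly check $\alpha_j\le\beta_j$ after sorting, and you verify the phantom counts with concrete profiles.
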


\begin{proof}
We first show the necessity of the axioms.
\textit{Anonymity} and \textit{peak-onlyness} follow from the definition of generalized median correspondences. 
Hence, it suffices to show that any rule in this class satisfies \textit{efficency}, \textit{weak peak-robustness}, \textit{endpoint invariance} and \textit{peak-monotonicity}.
\\

\underline{\textit{Efficiency}}: 
By Lemma \ref{lem_efficient}, it suffices to show $F(P) \subseteq  [\underline \tau(P), \overline \tau(P)]$ for any $P \in \mathcal{SP}^n$.
Suppose that $F(P) \nsubseteq  [\underline \tau(P), \overline \tau(P)]$.
This implies $l(P) < \underline \tau(P)$ or $u(P) > \overline \tau(P)$.
Assume that $l(P) < \underline \tau(P)$.
By the definition of $F$, $l(P) = \text{med}\{\tau(P), \alpha_1,\ldots,\alpha_{n-1}\}$.
Therefore, $l(P) \in \{\alpha_1,\ldots,\alpha_{n-1}\}$.
Suppose that $l(P) = \alpha_t < \underline \tau(P)$, $t \in \{1,\ldots,n-1\}$.
Then, there exists $i \in N$ such that $\tau(P_i) < \alpha_t$ because $l(P)$ is the $n$-th alternative of $\{\tau(P), \alpha_1,\ldots,\alpha_{n-1}\}$ by definition.
However, this contradicts $\alpha_t < \underline \tau(P)$.
Similarly, we obtain $u(P) \le \overline \tau(P)$
Therefore, $F(P) \subseteq  [\underline \tau(P), \overline \tau(P)]$, which means $F$ is \textit{efficiency}.
\\

\underline{ \textit{Weak peak-robustness}}:
By Lemma \ref{lem_condorcet_convex}, it suffices to show that $F$ satisfies \textit{set-valued Condorcet consistency} and \textit{convexity} to prove $F$ satisfies \textit{weak peak-robustness}.
Convexity follows from the definition of generalized quantile rules.
We now show that $F$ satisfies \textit{set-valued Condorcet consistency}.
It can be verified that $|\{i \in N \mid \tau(P_i) \le C(P)\}| \ge m+1$ and by the definition of $F$, $|\{j \in N  \mid \alpha_j = a_1\}| \ge m$, so $|\{i \in N \mid \tau(P_i) \le C(P)\}| + |\{j \in N  \mid \alpha_j = a_1\}| \ge 2m+1 = \text{med}\{1,2,\ldots,2n-1\} $.
This leads to $l(P) = \text{med}\{\tau(P), \alpha_1,\ldots,\alpha_{n-1}\} \le \text{med}\{\tau(P)\}$.
Similarly, we obtain $u(P) = \text{med}\{\tau(P), \beta_1,\ldots,\beta_{n-1}\} \ge \text{med}\{\tau(P)\}$.
By \textit{convexity} of $F$, we have $\text{med}\{\tau(P)\} = C(P) \in F(P)$.
Therefore, $F$ satisfies \textit{set-valued Condorcet consistency}.
\\

\underline{\textit{Endpoint invariance}}: 
Take any $P \in \mathcal{SP}^n$.
Let $|\{x \in  \{\tau(P), \alpha_1,\ldots,\alpha_{n-1}\} \mid  x < l(P)\}| = n_L$, $|\{x\in \{\tau(P), \alpha_1,\ldots, \alpha_{n-1}\} \mid x = l(P)\}| = n_M$ and $|\{x \in  \{\tau(P), \alpha_1,\ldots,\alpha_{n-1}\} \mid  x > l(P)\}| = n_R$.
Since $l(P)$ is the $n$-th alternative of $\{\tau(P), \alpha_1,\ldots,\alpha_{n-1}\}$ by definition, we obtain $n_L \le n-1$, $n_R \le n-1$, $n_M + n_R \ge n$.
Obviously, $n_M >0$.
Now, we consider $P^\prime \in \mathcal{SP}^n$ such that $|\{i \in N \mid \tau(P_i^\prime) < l(P)\}| = |\{i \in N \mid \tau(P_i) < l(P)\}|$ and $|\{i \in N \mid \tau(P_i^\prime) > l(P\}| = |\{i \in N \mid \tau(P_i) > l(P)\}|$.
Since $\alpha_1,\ldots,\alpha_{n-1}$ are fixed alternatives, we get $|\{x \in  \{\tau(P^\prime), \alpha_1,\ldots,\alpha_{n-1}\} \mid  x < l(P)\}| = n_L$ and $|\{x \in  \{\tau(P^\prime), \alpha_1,\ldots,\alpha_{n-1}\} \mid  x > l(P)\}| = n_R$.
Hence, it follows that $|\{x \in  \{\tau(P^\prime), \alpha_1,\ldots,\alpha_{n-1}\} \mid  x = l(P)\}| = n_M >0$. 
Given that $l(P^\prime)$ is the $n$-th alternative of $\{\tau(P^\prime), \alpha_1,\ldots,\alpha_{n-1}\}$, and $n_L \le n-1 , n_R \le n-1, n_M + n_R \ge n$, we have $l(P^\prime) = l(P)$.
The argument of $u$ is symmetric.
\\

\underline{\textit{Peak-monotonicity}}: 
Take any $P \in \mathcal{SP}^n$.
We consider $P^\prime \in \mathcal{SP}^n$ such that $P_{-i} = P^\prime_{-i}$ and $\tau(P_i) \le \tau(P_i^\prime)$.
Obviously, we obtain $l(P) = \text{med}\{\tau(P), \alpha_1,\ldots,\alpha_{n-1}\} \le \text{med}\{\tau(P^\prime),\alpha_1,\ldots,\alpha_{n-1}\} = l(P^\prime)$.
This implies \textit{peak-monotonicity} of $l$.
The argument of $u$ is symmetric.
\\

We now prove sufficiency. 
Let $F$ be an SCC that satisfies all the axioms.
By Lemmas \ref{lem_condorcet_convex} and \ref{lem_efficient}, we can write $F(P)=[l(P),u(P)] \subset[\underline \tau(P),\overline \tau(P)]$ for any $P \in \mathcal{SP}^n$.
Note that $C(P)=\text{med}\{\tau(P_1), \ldots, ,\tau(P_n)\} \in F(P)$.
Therefore, we can also see that 
\[
\underline \tau(P) \le l(P)\le \text{med}\{\tau(P_1), \ldots, ,\tau(P_n)\} \le u(P) \le \overline \tau(P)
\]
for any $P \in \mathcal{SP}^n$.
By \textit{peak-onlyness} of $F$, both $l$ and $u$ can be considered as single-valued SCCs $l,u: \mathcal{SP}^n\rightarrow X$ that satisfy \textit{peak-onlyness}. 

In what follows, we show that both $l(P)$ and $u(P)$ can be represented as generalized median median voting rules.
By Proposition \ref{prop_Moulin}, it suffices to show that a single-valued SCC $l,u$ satisfies \textit{anonymity}, \textit{efficiency}, and \textit{strategy-proofness}.
\\

\underline{\textit{Anonymity} of $l$ and $u$}:
By \textit{anonymity} of $F$, for $P \in \mathcal{SP}^n$ and any pemutation $\sigma$ on $N$, we have $[l(P),u(P)]=F(P)=F(P^{\sigma})=[l(P^{\sigma}),u(P^{\sigma})]$, which implies that $l(P)=l(P^{\sigma})$ and $u(P)=u(P^{\sigma})$.
\\

\underline{\textit{Efficiency} of $l$ and $u$}:
Since $l(P) \in [\underline \tau(P), \text{med}\{\tau(P_1), \ldots, ,\tau(P_n)\}]$ for any $P \in \mathcal{SP}^n$, the voter $i$ with $\tau(P_i)=\underline \tau(P)$ satisfies $l(P)P_i x$ for any $x>l(P)$.
Similarly,  for any $x<l(P)$, there exists $i \in N$ such that $\text{med}\{\tau(P_1), \ldots, ,\tau(P_n)\} \le \tau(P_i)$, which implies $l(P) P_i x$.
Therefore, $l$ satisfies \textit{efficiency}.
Similar reasoning also shows that $u$ satisfies \textit{efficiency} as well.
\\

\underline{\textit{Strategy-proofness of $l$ and $u$}}:
Take any $P \in \mathcal{SP}^n$.
First, let us consider a voter $i \in N$ with $\tau(P_i) > l(P)$.
By \textit{endpoint invariance}, we have $l(P) = l(P_i^\prime,P_{-i})$ for any $P^\prime \in \mathcal{SP}^n$ where $l(P) \le \tau(P^\prime _i)$.
Next, by \textit{peak-monotonicity}, we have $l(P_i^\prime,P_{-i})\le l(P)$ for any $P^\prime \in \mathcal{SP}^n$ where $\tau(P^\prime _i) < l(P)$.
By single-peakedness, for any $x<l(P)$, we must have $l(P) P_i x$.
Hence, in this case, $l(P) P_i l(P^\prime_i,P_{-i})$.
Consequently, any voter $i \in N$ whose peak satisfies $\tau(P_i) > l(P)$ has no incentive to misreport their preference.

By a symmetric argument, we obtain that a voter $i \in N$ with $\tau(P_i) < l(P)$ has no incentive to misreport their preference.
Obviously, a voter $i \in N$ with $\tau(P_i) = l(P)$ has no incentive to manipulate.
Summarizing the above arguments, we obtain that $l$ satisfies \textit{strategy-proofness}.
The argument for $u$ is symmetric.
\\

By Proposition \ref{prop_Moulin}, we have shown that, for any $P \in \mathcal{SP}^n$,
\[
l(P)=\text{med}\{\tau(P_1), \ldots, \tau(P_n),\alpha_1,\ldots, \alpha_{n-1}\},~~u(P)=\text{med}\{\tau(P_1), \ldots, \tau(P_n),\beta_1,\ldots, \beta_{n-1}\}
\] 
for some  $(\alpha,\beta)=(\alpha_1,\ldots,\alpha_{n-1},\beta_1,\ldots,\beta_{n-1}) \in X^{2(n-1)}$.
As we have seen that $\underline \tau(P) \le l(P)\le \text{med}\{\tau(P_1), \ldots, ,\tau(P_n)\} \le u(P) \le \overline \tau(P)$, we must have 
$|\{j\mid \alpha_j=a_1\}| \ge m$ and $|\{j\mid \beta_j=a_K\} \ge m|$, which completes the proof.
\end{proof}

\begin{proof}[Proof of Theorem \ref{thm_quan}]
We first show if part.
By Lemma \ref{lem_gmc}, it suffices to show that  \textit{tail invariance} are satisfied.
To show $F$ satisfies \textit{tail invariance}, we consider any voter $i \in N$. 
For any $P \in \mathcal{SP}^n$ , consider any $P^\prime \in \mathcal{SP}^n$ where  $\tau(P) \setminus \{\underline \tau(P), \overline \tau(P)\}=\tau(P') \setminus \{\underline \tau(P'), \overline \tau(P')\}$.
For any $Q \in \{1,\ldots,m\}$, $\tau_{(Q+1)} \neq \tau_{(1)} \text{ or }\tau_{(n)}$  and $\tau_{(n-Q)} \neq \tau_{(n)} \text{ or } \tau_{(n)}$.
Therefore, by the definition of $F$, $F(P) = F(P^\prime)$.

We now prove only if part.
By Lemma \ref{lem_gmc}, since \textit{tail-invariance} implies \textit{peak-onlyness}, according to Lemma \ref{lem_gmc}, it is necessary that $F$ takes the form of a  generalized median correspondence, $F^{\alpha,\beta}$ where $m \le |\{j \in N\mid \alpha_j=a_1\}| $ and $m \le |\{j \in N\mid \beta_j=a_K\}|$.
It suffices to show that $(\alpha,\beta) \in X^{2(n-1)}$ satisfies $\alpha_j, \beta_j \in \{a_1,a_K\} $ for any $j\in\{1, \ldots ,n-1\}$, $m \le |\{j \in N\mid \alpha_j=a_1\}| < 2m$ and $m \le |\{j \in N\mid \beta_j=a_K\}|< 2m$.

First, we show that $(\alpha,\beta) \in X^{2(n-1)}$ satisfies $\alpha_j, \beta_j \in \{a_1,a_K\} $ for any $j\in\{1, \ldots ,n-1\}$.
Suppose that there exists $j \in \{1,\ldots,n-1\}$ such that $\alpha_j \in X \setminus \{a_1,a_K\}$. 
Let such $\alpha_j$ be $a_t$.
Consider the profile $P \in \mathcal{SP}^n$ such that  $|\{j \in N \mid \alpha_j < a_t\}| = 2m$ and $\underline{\tau}(P)=a_t$.
In this case, we obtain $l(P) = a_t$.
For such $P$, we consider $P^\prime \in \mathcal{SP}^n$ such that $\underline{\tau}(P^\prime)=a_{t-1}$ and $\tau(P) \setminus \{\underline \tau(P), \overline \tau(P)\}=\tau(P') \setminus \{\underline \tau(P'), \overline \tau(P')\}$.
By the definition of $F^{\alpha,\beta}$, we obtain $l(P^\prime) = a_{t-1} \neq a_t =l(P)$, which contradicts \textit{tail invariance}. 
By similar logic, $\alpha_j \in  \{a_1,a_K\} \ \forall j\in \{1,\ldots,n-1\}$ can also be shown.
Hence, we proved $(\alpha,\beta) \in X^{2(n-1)}$ satisfies $\alpha_j, \beta_j \in \{a_1,a_K\} $ for any $j\in\{1, \ldots ,n-1\}$.

 Now we show that $(\alpha,\beta) \in X^{2(n-1)}$ satisfies $|\{j \in N\mid \alpha_j=a_1\}| \neq 2m$ and $|\{j \in N\mid \beta_j=a_K\}| \neq 2m$.
Since $F$ satisfies \textit{tail invariance}, $l(P) \neq \tau_{(1)}$ and $u(P) \neq \tau_{(n)}$ for any $P \in \mathcal{SP}^n$.
Based on the discussion above, note that the fixed alternatives $(\alpha, \beta)$ must be positioned as $\alpha_j, \beta_j \in \{a_1,a_K\} $ for any $j\in\{1, \ldots ,n-1\}$, $m \le |\{j \in N\mid \alpha_j=a_1\}| $ and $m \le |\{j \in N\mid \beta_j=a_K\}|$.
In this case, by the definitions of $l(P)$ and $u(P)$, we have $l(P) = \tau_{(|\{j \in N \mid \alpha_j= a_K\}| +1)}$,  and $u(P)=\tau_{(n-|\{j \in N \mid \beta_j= a_1\}|)}$.
Hence, for all $P \in \mathcal{SP}^n, \ l(P) = \tau_{(1)}$ if and only if   $|\{j \in N \mid \alpha_j = a_1\}| = 2m$.
Similarly, $u(P) = \tau_{(n)}$ if and only if $|\{j \in N \mid \beta_j = a_K\}| = 2m$.
Therefore, we obtain $|\{j \in N \mid \alpha_j = a_1\}| , |\{j \in N \mid \beta_j = a_K\}| < 2m$, which completes the proof.
\end{proof}

\subsection{Proof of Proposition \ref{prop_symquan}}

\begin{proof}[Proof of Proposition \ref{prop_symquan}]
We prove necessity first.
By Lemma \ref{lem_gmc}, it suffices to show that \textit{neutrality under reversal} is satisfied.

To show $F$ satisfies \textit{neutrality under reversal}, we consider any voter $i \in N$ and a given profile $P \in \mathcal{SP}^n$.
It suffices to show that $l(P) = \rho (u(P^-))$ and $u(P) = \rho (l(P^-))$ hold for $P^-$ corresponding to this $P$.
Without loss of generality, we assume that $|\{j \in N\mid \alpha_j=a_1\}|=|\{j \in N \mid \beta_j=a_K\}|= Q \ (m \le Q \le 2m)$.
In this case, by the definition of $F$, $l(P)=\tau_{(2m+1-Q)}(P) $ and $u(P) = \tau_{(Q+1)}(P)$.
Next, for this $P$, consider $P^-$.
Similarly, we obtain $l(P^-) = \tau_{(2m+1-Q)}(P^-) $ and $u(P^-) = \tau_{(Q+1)}(P^-)$.
By the definition of $\rho$, it is easy to see $\tau_{(2m+1-Q)}(P) = \rho(\tau_{(Q+1)}(P^-)) $ and $\tau_{(Q+1)}(P) = \rho(\tau_{(2m+1-Q)}(P^-)) $.
Hence, we obtain $l(P) = \rho (u(P^-))$ and $u(P) = \rho (l(P^-))$.

We now prove sufficiency. 
Let $F$ be an SCC that satisfies all the axioms.
By Theorem \ref{thm_quan}, it is necessary that $F$ is a quantile rule, $F^{Q_L,Q_U}$, where $F^{Q_L,Q_U}$ is a generalized median correspondence $F^{\alpha,\beta}$ whose fixed alternatives $(\alpha,\beta) \in X^{2(n-1)}$ satisfy $\alpha_j, \beta_j \in \{a_1,a_K\} $ for any $j\in\{1, \ldots ,n-1\}$, $m \le |\{j \in N\mid \alpha_j=a_1\}| < 2m$ and $m \le |\{j \in N\mid \beta_j=a_K\}|< 2m$.
It is remain to show that $(\alpha,\beta) \in X^{2(n-1)}$ satisfies $\alpha_j, \beta_j \in \{a_1,a_K\} $ for any $j\in\{1, \ldots ,n-1\}$, $m \le |\{j \in N\mid \alpha_j=a_1\}| < 2m$ and $m \le |\{j \in N\mid \beta_j=a_K\}|< 2m$, and $|\{j \in N\mid \alpha_j=a_1\}|=|\{j \in N\mid \beta_j=a_K\}|$.
By this parametric specification, we can see that $F^{\alpha}(P)=\tau_{(|\{j \in N \mid \alpha_j=a_K\}|+1)}(P)$ and $F^{\beta}(P)=\tau_{(n-|\{j \in N\mid \beta_j=a_1\}|)}(P)$.
By setting $Q=|\{j\mid \alpha_j=a_K\}|=|\{j\mid \beta_j=a_1\}|$, we obtain a symmetric quantile rule.

Seeking a contradiction, suppose that $|\{j \in N \mid\alpha_j = a_1\}| \neq |\{j \in N \mid\beta_j = a_K\}|$. 
Without loss of generality, we assume $|\{j \in N \mid\alpha_j = a_1\}| > |\{j \in N \mid\beta_j = a_K\}|$.
Let $P \in \mathcal{SP}^n$ be $|\{i \in N \mid \tau(P_i) = a_1\}| + |\{j \in N \mid \alpha_j = a_1\} |= m+1$.
In this case, $l(P) = \text{med}\{\tau(P_1),\ldots,\tau(P_n),\alpha_1,\ldots,\alpha_{n-1}\} = a_1$.
For $P$ as described above, consider $P^-$.
By the construction of $P$, we obtain $u(P)<a_K$.
This contradicts neutrality under reversal.
Therefore, $|\{j \in N \mid \alpha_j = a_1\}| = |\{j \in N \mid \beta_j = a_K\}|$, which completes the proof.
\end{proof}

\section{Independence of axioms}
Non-redundancy of the axioms in Theorems \ref{thm_median} and \ref{thm_quan} are verified as follows.

 \subsection*{Non-redunduncy  of the axioms in Theorem \ref{thm_median}}

\begin{description}
    \item[] \underline{\textit{Efficiency}}: For any $P \in \mathcal{SP}^n$, let
    \[
    F(P) = X.
    \]
    This rule satisfies all the axioms in Theorem \ref{thm_median} except \textit{efficiency}.

    \item[] \underline{\textit{Peak-robustness}}: For any $P \in \mathcal{SP}^n$, let
    \[
    F(P) = \tau(P) \setminus \{\underline \tau(P), \overline \tau(P)\}.
    \]
     This rule satisfies all the axioms in Theorem 
      \ref{thm_median} except \textit{peak-robustness}.

    \item[] \underline{\textit{Tail invariance}}: For any $P \in \mathcal{SP}^n$, let
        \[
    F(P) = 
    \begin{cases} 
    \text{med}\{\tau(P)\} & \text{if } \underline{\tau}(P) =a_1 \text{ and } \overline{\tau}(P)=a_K, \\
    X & \text{otherwise}.
    \end{cases}
    \]
   This rule satisfies all the axioms in Theorem 
      \ref{thm_median} except \textit{tail invariance}.
      
\end{description}

\subsection*{Non-redunduncy  of the axioms in Theorem \ref{thm_quan}}

\begin{description}
    \item[] \underline{\textit{Anonymity}}: Fix a voter $i$ and suppose that $\tau(P_i)=\tau_{(t)}(P)$. For any $P \in \mathcal{SP}^n$,
    \[
    F(P) = 
    \begin{cases} 
    [\tau_{(2)}(P), \text{med}\{\tau(P)\}] & \text{if } t = 1, \\
    [\tau_{(t)}(P),\text{med}\{\tau(P)\}] & \text{if }2 \le t \le m,
    \\
    \text{med}\{\tau(P)\} & \text{if }  t = m+1,\\
    [\text{med}\{\tau(P)\}, \tau_{(t)}(P)] & \text{if } m+2 \le t \le 2m,  \\
    [\text{med}\{\tau(P)\},\tau_{(2m)}(P)] & \text{if } t = 2m+1.
    \end{cases}
    \]
    This rule satisfies all the axioms in Theorem \ref{thm_quan} except \textit{anonymity}.

    \item[] \underline{\textit{Efficiency}}: 
    For any $P \in \mathcal{SP}^n$, let
    \[
    F(P) = X.
    \]
    This rule satisfies all the axioms in Theorem \ref{thm_quan} except \textit{efficiency}.

   \item[] \underline{\textit{Weak peak-robustness}}: For any $P \in \mathcal{SP}^n$, let
    \[
    F(P) = \tau(P).
    \]
    This rule satisfies all the axioms in Theorem \ref{thm_quan} except \textit{weak peak-robustness}.

    \item[] \underline{\textit{Tail invariance}}: For any $P \in \mathcal{SP}^n$, let
    \[
    F(P) = [\tau_{(1)}(P), \tau_{(n)}(P)].
    \]
    This rule satisfies all the axioms in Theorem \ref{thm_quan} except \textit{tail invariance}.

    \item[] \underline{\textit{Endpoint invariance}}: Suppose that $med\{\tau(P)\} = a_s$, $\tau_{(2)}(P) = a_{t_1}$, and $\tau_{(n-1)}(P) = a_{t_2}$. 
    For any $P \in \mathcal{SP}^n$, let 
    \[
    F(P) = \left[ a_{\lfloor \frac{t_1 + s}{2} \rfloor}, a_{\lceil \frac{s + t_2}{2} \rceil} \right].
    \]
    This rule satisfies all the axioms in Theorem \ref{thm_quan} except \textit{endpoint invariance}.
    
    \item[] \underline{\textit{Peak-monotonicity}}: For any $P \in \mathcal{SP}^n$, let $F(P) = [l(P),u(P)]$ such that
    \[
    l(P) = 
    \begin{cases}
        \tau_3(P) & \text{if } |\{i \in N \mid \tau(P_i) < \tau_3(P)\}| =  2,  \\
        \tau_2(P) & \text{otherwise}.
    \end{cases}
    \]
   \[
    u(P) = 
    \begin{cases}
        \tau_{n-2}(P) & \text{if } |\{i \in N \mid \tau(P_i) > \tau_{n-2}(P)\}| =  2,  \\
        \tau_{n-1}(P) & \text{otherwise}.
    \end{cases}
    \]
    
    This rule satisfies all the axioms in Theorem \ref{thm_quan} except \textit{peak-monotonicity}.

\end{description}

\bibliography{references}

\end{document}